\documentclass[]{fairmeta}

\usepackage[utf8]{inputenc}
\usepackage[T1]{fontenc}
\usepackage{url}
\usepackage{amsfonts}
\usepackage{amsmath}
\usepackage{amssymb}
\usepackage{amsthm}
\newtheorem{remark}{Remark}
\newtheorem{corollary}[remark]{Corollary}
\usepackage{nicefrac}
\usepackage{enumitem}
\usepackage{tikz}
\usepackage{pifont}
\usepackage{float}
\newcommand{\cmark}{\ding{51}}
\newcommand{\xmark}{\ding{55}}

\definecolor{prismblue}{HTML}{2B5EA7}
\definecolor{prismbg}{HTML}{F5F7FB}
\newtcolorbox{principlebox}[2][]{%
  enhanced,
  colback=prismbg,
  colframe=prismbg,
  boxrule=0pt,
  borderline west={3pt}{0pt}{prismblue},
  arc=0pt, outer arc=0pt,
  left=10pt, right=10pt, top=7pt, bottom=7pt,
  before skip=8pt, after skip=4pt,
  before upper={\noindent{\large\bfseries\sffamily\color{prismblue}#2}\par\smallskip},
  #1
}

\definecolor{plotblue}{HTML}{4C72B0}
\definecolor{plotred}{HTML}{C44E52}
\definecolor{plotgreen}{HTML}{55A868}
\definecolor{plotorange}{HTML}{DD8452}
\definecolor{plotpurple}{HTML}{8172B3}

\title{Computer Use at the Edge of the \\ Statistical Precipice}

\author[1]{Pierluca D'Oro}
\author[1]{Sneha Silwal}
\author[1]{William Wong}
\author[1]{Yuxuan Sun}
\author[1]{Fanyi Xiao}
\author[1]{Manchen Wang}
\author[1]{Eric Gan}
\author[1]{Allen Bolourchi}
\author[1]{Joseph Tighe}

\affiliation[1]{Meta Superintelligence Labs}

\abstract{Evaluating Computer Use Agents (CUAs) on interactive environments is fraught with methodological pitfalls that the field has yet to systematically address. We show that a 1MB replay script that blindly executes a recorded action sequence without ever observing the screen outperforms frontier models on prominent static benchmarks, and prove that its expected success rate is exactly equal to the source agent's pass@$k$ in deterministic environments. We trace this and other failures to two root causes: non-principled environment design (static, unsandboxed, or unreliably verified environments) and non-principled evaluation methodology (na\"ive aggregation and misuse of pass@$k$ for stateful UI interactions). To address the first, we propose PRISM, five design principles for CUA environments (privileged verification, realistic environments, integrity-checked configurations, sandboxed execution, and multifactorial variability) and instantiate them in \textsc{DigiWorld}, a benchmark of 15 realistic sandboxed mobile applications able to evaluate agents in over 3.2 million verified unique configurations. To address the second, we develop an aggregation framework pairing Wilson score intervals with hierarchical bootstrap, producing confidence intervals that correctly account for the nested structure of CUA benchmarks, as we empirically demonstrate. All together, we show that principled environment design and rigorous evaluation methodology are not optional refinements but prerequisites for meaningful CUA research.}

\date{May 2026}
\correspondence{\email{pierluca@meta.com, tighej@meta.com}}

\begin{document}

\maketitle

\section{Introduction}
\label{sec:intro}

What does it take for a Computer Use Agent (CUA) benchmark to produce reliable measurements? At minimum, the environment must be realistic, sandboxed, and varied enough to prevent memorization; its configurations must be verified as solvable; and task completion must be checked against internal application state rather than brittle proxies. Upon careful inspection, among existing major CUA benchmarks~\citep{xie2024osworld, rawles2024, drouin2024workarenacapablewebagents, koh2024visualwebarena, garg2025real, he2024webvoyager, sun2025, kong2025, ullrich2025, liu2018}, \emph{none} of them satisfies all of these design principles.

The consequences of these design gaps are severe and interconnected. Environments that depend on live services, drift unpredictably, making results irreproducible. Toy or synthetic interfaces lack the ecological validity to predict deployment performance. Static environments, where each task begins from a single fixed state, allow agents to succeed via memorized action sequences rather than genuine visual reasoning. As configuration spaces grow through parameterization, the absence of automated integrity checking means that a substantial fraction of tasks may be impossible, incoherent, or trivially pre-solved, silently corrupting aggregate metrics. And reliance on LLM-as-judge or screenshot comparison for verification introduces biases and adversarial vulnerabilities~\citep{zheng2023judging}. Concurrent work confirms that these failures are not theoretical: systematic audits have shown that major benchmarks can be exploited to achieve near-perfect scores without solving any tasks~\citep{wang2026}, and that such exploitation is already widespread~\citep{stein2026detecting}.

Compounding the environmental problem, \emph{evaluation methodology} introduces its own failures. Reporting a single success rate, often without confidence intervals or with intervals computed as if each trial were an independent coin flip, produces rankings that are unreliable under replication. The problem is worsened by the widespread misuse of pass@$k$ \citep{chen2021evaluatinglarge}, a metric designed for \emph{stateless} code generation that is fundamentally inappropriate for \emph{stateful} UI interactions.

This paper addresses both the environmental and methodological problems:

\begin{enumerate}[leftmargin=*,itemsep=2pt]
\item \textbf{Diagnosing pitfalls of current benchmarks.} We empirically demonstrate that, on popular deterministic benchmarks, outrageously small agents (of about 1MB of size) that replay successful action sequences from a frontier model can outperform the corresponding source model. We formalize this observation by establishing that the expected success rate of such a replay agent is exactly equal to the source agent's pass@$k$ (Remark~\ref{rem:replay}), revealing that pass@$k$ on static benchmarks measures memorization capacity rather than agent capability.

\item \textbf{PRISM design principles and \textsc{DigiWorld}.} We propose five design principles for CUA environments (\emph{privileged} verification, \emph{realistic} environments, \emph{integrity-checked} configurations, \emph{sandboxed} execution, and \emph{multifactorial} variability; PRISM) that benchmarks must satisfy to produce meaningful evaluations, and we instantiate them in \textsc{DigiWorld}, a combinatorial benchmark featuring 15 sandboxed mobile applications across domains. With 387 scenarios and over 4,300 task instances, \textsc{DigiWorld} independently varies data profiles (${\sim}10$ per app), visual themes (${\sim}10$), and initial UI states (${\sim}10$), generating over 3.2 million verified unique configurations, which leads to a benchmark robust to replay and memorization.

\item \textbf{Hierarchical aggregation framework.}
Environment variability is necessary for robust evaluation but, paired with stochasticity from an agent's actions, requires a rigorous statistical framework for confidence interval computation.
Since existing na\"ive aggregation strategies fail to handle this complexity, we develop a rigorous evaluation methodology that uses Wilson score intervals at the rollout level and a hierarchical bootstrap at the suite level, producing confidence intervals that correctly account for the nested structure of interactive benchmarks. We validate this framework through simulations and use it to evaluate frontier models on \textsc{DigiWorld}, showing that it offers better statistical coverage than widespread aggregation techniques and that it allows to rigorously track performance variability along each environmental axis.
\end{enumerate}

Our title is an homage to \citet{agarwal2021deep}, which exposed analogous statistical failures in deep RL evaluation. Where that work showed that point estimates of mean scores across a handful of Atari games produced unreliable agent rankings, we show that the CUA evaluation ecosystem suffers from a \emph{compounded} version of the same issue, non-principled environments feeding non-principled evaluation methodology, and we propose concrete solutions to both.

\section{Common pitfalls of CUA environment and metric design}
\label{sec:diagnosis}

Existing CUA benchmarks exhibit several recurring environmental weaknesses: unreliable verification via LLM-as-judge~\citep{zheng2023judging, wang2026}, synthetic interfaces that lack ecological validity~\citep{ullrich2025, liu2018}, absence of automated integrity checking for generated configurations, dependence on live environments that drift over time~\citep{he2024webvoyager, drouin2024workarenacapablewebagents}, and static single-configuration setups vulnerable to memorization~\citep{zhang2018studyoverfittingdeepreinforcement, pmlr-v119-cobbe20a}. Compounding these, evaluation \emph{methodology} introduces its own failures: na\"ive statistical aggregation~\citep{agarwal2021deep} and misuse of pass@$k$ produce rankings that cannot be replicated (Section~\ref{sec:fragility}). We now demonstrate the two most consequential failures empirically: the vulnerability of static environments to memorization, and the statistical fragility of standard reporting practices.

\subsection{Outrageously small agents can succeed on static benchmarks}
\label{sec:overfitting}

\textbf{Setup.}~~When a deterministic benchmark uses a single fixed initial state per task, blindly replaying a previously successful action sequence should succeed at a high rate.
We construct a \emph{replay agent}: for each task, we record the action sequence from a successful trajectory produced by a frontier model, then replay that exact sequence on subsequent evaluations without observing the state of the screen. The replay agent performs no visual reasoning; it simply executes the same clicks, taps, and keystrokes in the same order. We evaluate the replay agent on two prominent CUA benchmarks, \textsc{OSWorld} and \textsc{MobileWorld}, and on \textsc{DigiWorld} (our own benchmark described in Section~\ref{sec:digiworld}).

\begin{table}[t]
  \caption{Success rate of source frontier model and replay agent. Blindly replaying a recorded trajectory achieves high success rates on static benchmarks but collapses on \textsc{DigiWorld}.}
  \label{tab:overfitting}
  \centering
  \begin{tabular}{lccc}
    \toprule
    & \textsc{OSWorld} & \textsc{MobileWorld} & \textsc{DigiWorld}  \\
    \midrule
    Frontier model & 70.6\% & 32.7\% & 45.7\% \\
    $\sim$ 1MB-script replay agent  & 71.1\% & 41.5\% & 6.90\% \\
    \bottomrule
  \end{tabular}
\end{table}

\textbf{Results.}~~ Table~\ref{tab:overfitting} shows that the replay agent achieves higher success rate than the corresponding frontier model on static benchmarks, despite never observing the screen. Because the environment resets to the same initial state every time, the same action sequence produces the same outcome (apart from small mismatches due to benchmark flakiness). On \textsc{DigiWorld}, where each evaluation presents a novel combination of data profile, theme, and UI state, the replay agent collapses to near zero: the recorded actions no longer correspond to the correct interface elements. This confirms that static benchmarks are largely measuring whether the correct action sequence is \emph{known}, not whether the agent can \emph{reason} about what it sees.
See Appendix~\ref{app:replay_details} for evaluation details.

\textbf{Formal grounding.}~~ The replay agent's success is not accidental but structurally guaranteed. Consider a deterministic benchmark with $N$ tasks and a stochastic source agent $\pi$ with success probability $p_i$ on task $i$. Let $\pi_R$ be the replay policy constructed from $k$ independent rollouts of $\pi$: for each task, $\pi_R$ stores one successful trajectory (if any) and replays it verbatim. We write $\mathrm{SR}(\cdot)$ for the success rate averaged over tasks.

\begin{remark}[Replay Equivalence]
\label{rem:replay}
On a deterministic benchmark with fixed initial states, a replay policy that stores and replays one successful trajectory per task (if any exists among $k$ rollouts) achieves an expected success rate equal to the source agent's \emph{pass@}$k$:
\[
\mathbb{E}[\mathrm{SR}(\pi_R)] = \mathrm{pass@}k(\pi).
\]
Furthermore, $\mathrm{SR}(\pi) \leq \mathrm{pass@}k(\pi)$ for all $k \geq 1$, with strict inequality whenever $k > 1$ and $p_i \in (0,1)$ for some task~$i$.
\end{remark}

\begin{corollary}[Brute-Force Solvability of Static Benchmarks]
\label{cor:bruteforce}
For any agent with $p_i > 0$ on all $N$ tasks, $\lim_{k \to \infty} \mathbb{E}[\mathrm{SR}(\pi_R)] = 1$. Every static benchmark can be solved to arbitrary precision by running a stochastic agent for sufficiently many rollouts and memorizing successful trajectories.
\end{corollary}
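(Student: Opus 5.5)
The corollary follows from Remark~\ref{rem:replay}, so the plan is to reduce everything to the closed form of pass@$k$ and take a limit. Here pass@$k(\pi)$ means the probability, averaged over tasks, that at least one of $k$ independent rollouts succeeds:
\[
\mathrm{pass@}k(\pi) = \frac{1}{N}\sum_{i=1}^{N} \bigl(1-(1-p_i)^k\bigr).
\]

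\textbf{Step 1 (expected success of the replay policy).} Fix a task $i$. There are two cases.
\begin{itemize}
\item If at least one of the $k$ rollouts succeeds, $\pi_R$ stores that trajectory. Because the benchmark is deterministic with a fixed initial state, replaying the same action sequence reproduces the same final state. The replay therefore succeeds with probability one.
\item If no rollout succeeds, $\pi_R$ has nothing useful to replay. We count this as a failure, which is conservative.
\end{itemize}
The rollouts are independent, so no rollout succeeds with probability $(1-p_i)^k$. Hence task $i$ is solved by $\pi_R$ with probability $1-(1-p_i)^k$. Averaging over tasks by linearity of expectation gives the formula above, which is the first claim of Remark~\ref{rem:replay}. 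We may simply invoke that remark here.

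\textbf{Step 2 (the limit).} The hypothesis $p_i>0$ for all $i$ gives $0\le 1-p_i<1$, so $(1-p_i)^k\to 0$ as $k\to\infty$ for each task. The sum has finitely many ($N$) terms, so we can exchange the limit with the average. This gives
\[
\lim_{k\to\infty}\mathbb{E}[\mathrm{SR}(\pi_R)]=\frac{1}{N}\sum_{i=1}^{N} 1 = 1.
\]

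\textbf{Step 3 (explicit rate).} To make ``arbitrary precision'' concrete, let $p_{\min}=\min_i p_i>0$. Then
\[
1-\mathbb{E}[\mathrm{SR}(\pi_R)] \le (1-p_{\min})^k \le e^{-k p_{\min}}.
\]
So any target error $\varepsilon>0$ is reached once $k\ge \log(1/\varepsilon)/p_{\min}$.

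\textbf{Main obstacle.} There is no real mathematical difficulty. The only point needing care is the modeling assumption behind Step 1: that a deterministic environment with a fixed initial state makes replay succeed exactly whenever the stored trajectory succeeded. Once that assumption is accepted from the setup of Remark~\ref{rem:replay}, the corollary is elementary.
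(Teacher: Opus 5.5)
Your proof is correct and follows the paper's argument: the paper likewise writes $\mathbb{E}[\mathrm{SR}(\pi_R)] = \frac{1}{N}\sum_{i=1}^{N}[1-(1-p_i)^k]$ from Remark~\ref{rem:replay} and lets each term $(1-p_i)^k \to 0$. Your Step~3 bound $1-\mathbb{E}[\mathrm{SR}(\pi_R)] \le (1-p_{\min})^k \le e^{-k p_{\min}}$ is a valid addition the paper does not state, and it makes the ``arbitrary precision'' claim quantitative.
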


\noindent Proofs are in Appendix~\ref{app:replay_proof}. Remark~\ref{rem:replay} reveals that the replay agent is the physical embodiment of pass@$k$: a constructive, memoryless algorithm that converts any nonzero success probability into near-perfect benchmark performance. This establishes a formal link between the environmental vulnerability demonstrated above and the methodological problems discussed next: \emph{reporting pass@$k$ on a static benchmark is equivalent to reporting the success rate of a blind replay policy}.

\textbf{Implications.}~~ Although our demonstration is purposely extreme, it points to the fact that any benchmark score obtained on a static environment is confounded with the degree to which the agent (or its training data) may have been inadvertently exposed to the specific evaluation trajectories. Improvements on such benchmarks may reflect better memorization rather than better visual grounding or capabilities. This does not mean that these benchmarks are without value, as they contain carefully designed tasks that test real capabilities, but that their \emph{environments} must be redesigned to prevent this type of exploitation. Multifactorial variability (Principle~M, Section~\ref{sec:principles}) breaks the replay equivalence by ensuring the initial state differs across evaluations along multiple independent axes.

\subsection{Point estimates and pass@$k$ as fragile CUA evals foundation}
\label{sec:fragility}

The environmental vulnerability above is compounded by methodological failures in how results are reported and aggregated. We identify three specific failures:

\textbf{Failure 1: Treating tasks as i.i.d.\ trials.}~~ A benchmark with $N$ tasks, each attempted once, is \emph{not} equivalent to $N$ coin flips. Tasks within the same application share UI flows, navigation patterns, and interaction primitives, and agent success is correlated across tasks that exercise similar capabilities~\citep{agarwal2021deep}. CIs that ignore this correlation dramatically understate uncertainty, leading to under-coverage rates exceeding 50\% on realistic benchmark structures (Section~\ref{sec:simulations}).

\textbf{Failure 2: Misuse of pass@$k$.}~~
The pass@$k$ estimator \citep{chen2021evaluatinglarge} was designed for code generation, where an evaluator can sample $k$ candidate programs, run each in isolation, and report whether \emph{any} candidate passes the tests. This protocol is meaningful because unsuccessful candidates do not alter the problem instance, and the evaluator can select the successful artifact after the fact. CUA trajectories do not have this structure: a rollout is not a candidate artifact; it is an execution that changes application state.
The operational quantity of interest is single-attempt success, not whether success appears at least once among $k$ attempts.
Moreover, as Remark~\ref{rem:replay} makes concrete: on a static benchmark, pass@$k$ is formally identical to the expected success rate of a replay policy that stores one successful trajectory and later replays it blindly. Reporting pass@$k$ in this setting measures the ability to find and reuse a successful trajectory, not reliable deployment capability.

\textbf{Failure 3: Ignoring the nested data structure.}~~ CUA benchmarks have an inherently hierarchical structure: suites contain apps, apps contain scenarios, scenarios may be evaluated across configurations, and each configuration is attempted across rollouts. Collapsing this hierarchy into a single fraction discards information about \emph{where} variability arises and produces confidence intervals that are either too narrow or too wide. We validate these concerns in Section~\ref{sec:simulations} and Section~\ref{sec:splithalf}.

\section{Principles for Computer Use Environment Design}
\label{sec:principles}

The diagnosis in Section~\ref{sec:diagnosis} demonstrates that current benchmarks fail on two fronts: their environments are designed without principled safeguards, and their evaluation methodology does not withstand scrutiny. We begin the constructive part of the paper by proposing five design principles that a CUA benchmark environment must satisfy. Together, these principles form the \textbf{PRISM} framework: \emph{privileged} verification, \emph{realistic} environments, \emph{integrity-checked} configurations, \emph{sandboxed} execution, and \emph{multifactorial} variability.

\begin{tcolorbox}[
  enhanced,
  colback=prismbg,
  colframe=prismblue,
  boxrule=0.8pt,
  arc=2pt,
  left=10pt, right=10pt, top=8pt, bottom=8pt,
  title={\sffamily\bfseries\color{white} PRISM Design Principles},
  coltitle=white,
  colbacktitle=prismblue,
  fonttitle=\sffamily\bfseries,
]
\begin{description}[leftmargin=0pt, itemsep=4pt, font=\sffamily\bfseries\color{prismblue}\small]
\item[P\;---\;Privileged verification.] Task completion must be verified as much as possible against the environment's internal ground-truth state.
\item[R\;---\;Realistic environments.] Benchmark applications must faithfully reproduce the visual appearance, interaction patterns, and functional complexity of real-world applications.
\item[I\;---\;Integrity-checked configurations.] Every environment configuration must be automatically verified, as feasible (preconditions satisfied), coherent (task-specific data consistent with existing state), and non-trivial (task not already completed in the initial state).
\item[S\;---\;Sandboxed execution.] The environment must be fully self-contained, with no external dependencies, and must support deterministic reset to any specified initial state.
\item[M\;---\;Multifactorial variability.] The environment must vary independently along multiple axes relevant to the capabilities being measured, such as data content, visual appearance, and initial application state.
\end{description}
\end{tcolorbox}

\noindent Each principle addresses a specific failure mode identified in Section~\ref{sec:diagnosis}. \emph{Privileged verification} eliminates the biases and adversarial vulnerabilities of LLM-as-judge and screenshot-based verification~\citep{zheng2023judging, wang2026}; privileged access to internal state makes verification deterministic and tamper-proof. \emph{Realistic environments} ensure ecological validity, preventing the gap between toy interfaces and real-world deployment. \emph{Integrity checking} is a structural necessity of any combinatorial benchmark: without it, a substantial fraction of generated configurations may be impossible or trivially pre-solved, silently corrupting aggregate metrics. \emph{Sandboxed execution} ensures reproducibility: benchmarks that depend on live services produce results that drift with UI updates, A/B tests, data changes, and network or software interruptions. Finally, \emph{multifactorial variability} prevents agents from succeeding by replaying memorized action sequences (Section~\ref{sec:overfitting}) and enables fine-grained diagnostics by decomposing performance along each axis of variation.

\subsection{Evaluating existing benchmarks against PRISM}
\label{sec:prism_eval}

Table~\ref{tab:prism_eval} evaluates ten major visual CUA benchmarks against the PRISM criteria (Appendix~\ref{app:prism_details} provides detailed per-benchmark analysis with rubric definitions). No existing benchmark satisfies all five principles. The most common gap is Integrity (I): while several benchmarks include initialization logic that provides partial safeguards, none really provides an automated offline pipeline that verifies every configuration is feasible, coherent, and non-trivial. This gap is especially dangerous in live or evolving environments where data drift can silently make tasks impossible. The Multifactorial (M) axis is also broadly absent: most benchmarks offer at most single-axis parameterization without independently varying visual appearance or data context.

\begin{table}[t]
  \caption{Evaluation of visual CUA benchmarks against the PRISM principles. \cmark{} = fully satisfies, {\raise.17ex\hbox{$\scriptstyle\sim$}} = partially satisfies, \xmark{} = does not satisfy. No existing benchmark satisfies all five principles; \textsc{DigiWorld} is designed to be the first.}
  \label{tab:prism_eval}
  \centering
  \resizebox{\textwidth}{!}{%
  \begin{tabular}{llccccc}
    \toprule
    Benchmark & Platform & \textbf{P}rivileged & \textbf{R}ealistic & \textbf{I}ntegrity & \textbf{S}andboxed & \textbf{M}ultifactorial \\
    \midrule
    AndroidWorld \citep{rawles2024} & Mobile & \cmark & \cmark & {\raise.17ex\hbox{$\scriptstyle\sim$}} & \cmark & {\raise.17ex\hbox{$\scriptstyle\sim$}} \\
    OSWorld \citep{xie2024osworld} & Desktop & \cmark & \cmark & {\raise.17ex\hbox{$\scriptstyle\sim$}} & \cmark & \xmark \\
    VisualWebArena \citep{koh2024visualwebarena} & Web & \cmark & \cmark & \xmark & \cmark & {\raise.17ex\hbox{$\scriptstyle\sim$}} \\
    WorkArena \citep{drouin2024workarenacapablewebagents} & Enterprise & \cmark & \cmark & \xmark & \xmark & {\raise.17ex\hbox{$\scriptstyle\sim$}} \\
    REAL-bench \citep{garg2025real} & Web & {\raise.17ex\hbox{$\scriptstyle\sim$}} & \cmark & \xmark & \cmark & \xmark \\
    DigiData-Bench \citep{sun2025} & Mobile & \xmark & \cmark & \xmark & \xmark & \xmark \\
    WebVoyager \citep{he2024webvoyager} & Web & \xmark & \cmark & \xmark & \xmark & \xmark \\
    MobileWorld \citep{kong2025} & Mobile & \cmark & \cmark & {\raise.17ex\hbox{$\scriptstyle\sim$}} & \cmark & \xmark \\
    OpenApps \citep{ullrich2025} & Mobile/Web & \cmark & {\raise.17ex\hbox{$\scriptstyle\sim$}} & \xmark & \cmark & \cmark \\
    MiniWoB++ \citep{liu2018} & Web & \cmark & \xmark & \xmark & \cmark & {\raise.17ex\hbox{$\scriptstyle\sim$}} \\
    \midrule
    \textsc{DigiWorld} (ours) & Mobile & \cmark & \cmark & \cmark & \cmark & \cmark \\
    \bottomrule
  \end{tabular}%
  }
\end{table}

\section{From point estimates to principled aggregation}
\label{sec:metrics}

\subsection{Sources of variability in CUA evaluation}
\label{sec:hierarchy}

CUA evaluation is characterized by two fundamentally different kinds of variability. \textbf{Action stochasticity} arises from the agent itself: given the exact same screen and task, different samples produce different action sequences. \textbf{Environmental variability} arises from the benchmark: different data, visual styles, or starting screens present the agent with different problems. Distinguishing these is critical because they answer different questions: ``how reliably does the agent solve \emph{this} problem?''\ versus ``how robust is the agent \emph{across} problems?''

Moreover, environments satisfying the PRISM principles feature stochasticity at multiple levels, which should be handled carefully by the statistical methodology used for agent evaluation. In this section, we develop an aggregation framework that properly handles performance estimation in such stochastic environments, and that fits the natural hierarchy of CUA benchmarks. The framework is general: it applies to any PRISM-compliant benchmark, not only to the one presented in this paper.

Any PRISM-compliant benchmark produces data with a nested hierarchy. At the coarsest level, a benchmark contains a suite of \textbf{apps} ($A$), each of which contains \textbf{scenarios} ($S_a$), i.e., task templates. Each scenario can be evaluated under multiple \textbf{configurations} ($C_{a,s}$), defined by the Cartesian product of whichever environmental axes the benchmark varies (e.g., data content, visual appearance, initial state). Finally, each configuration is attempted across multiple \textbf{rollouts} ($R$), which differ only in the agent's stochastic sampling of actions. Each rollout produces a binary outcome $Y_{a,s,c,r} \in \{0, 1\}$.

A sensible choice for environmental axes for CUA are instance parameters, data profiles, themes, and UI states (see Appendix~\ref{app:variability_axes}).
When all axes are disabled, there is exactly one configuration per scenario and the data reduce to a matrix of shape $A \times \bar{S} \times R$, analogous to the games $\times$ seeds structure in RL evaluation \citep{agarwal2021deep}. As axes are enabled, the configuration space grows combinatorially and the data form a tree of depth four. An effective aggregation framework should handle both extremes and everything in between. We address this bottom-up: first the leaf level (per-configuration intervals, Section~\ref{sec:wilson}), then the full tree (suite-level aggregation, Section~\ref{sec:bootstrap}).

\subsection{Wilson score intervals for binary outcomes}
\label{sec:wilson}

For a given configuration $(a, s, c)$ with $R$ rollouts, the empirical success rate is $\hat{p}_{a,s,c} = \frac{1}{R}\sum_{r=1}^{R} Y_{a,s,c,r}$. A common choice for computing a confidence interval for this success rate is the standard Wald interval $\hat{p} \pm z\sqrt{\hat{p}(1-\hat{p})/R}$. However, the Wald interval has poor coverage at extreme success rates and small $R$ \citep{brown2001}: it collapses to a zero-width degenerate interval when $\hat{p} = 0$ or $\hat{p} = 1$. To avoid this, we propose to adopt the use of the Wilson score interval~\citep{wilson1927}, which is a simple algebraic transformation of the Wald interval that provides near-nominal coverage even for small $R$ and extreme $\hat{p}$:
\begin{equation}
\hat{p}_W = \frac{\hat{p} + \frac{z^2}{2R}}{1 + \frac{z^2}{R}}, \qquad
W = \frac{z}{1 + \frac{z^2}{R}} \sqrt{\frac{\hat{p}(1-\hat{p})}{R} + \frac{z^2}{4R^2}}
\label{eq:wilson}
\end{equation}
where $z = z_{\alpha/2}$ is the standard normal quantile. The Wilson interval $[\hat{p}_W - W, \hat{p}_W + W]$ provides near-nominal coverage even for $R$ as small as 3 and $\hat{p}$ near 0 or 1.

\subsection{Hierarchical bootstrap}
\label{sec:bootstrap}

\textbf{Suite-level aggregation.}~~ Collapsing all outcomes into a single fraction treats every rollout as an independent coin flip, ignoring the correlation structure diagnosed in Section~\ref{sec:fragility}. Following \citet{agarwal2021deep}, who showed that na\"ive averaging across domains (e.g. Atari games) produces unreliable RL agent rankings, we propose to aggregate in two stages. First, by computing a per-app success rate $\bar{p}_a$ by averaging over all scenarios, configurations, and rollouts within app $a$. Then, by summarizing the suite as $\hat{\theta} = \frac{1}{A}\sum_{a=1}^{A}\bar{p}_a$\label{eq:suite}, the mean across apps. Because apps are a curated population rather than a random sample, we treat them as fixed and never resample them; this also ensures aggregation fairness, since each app contributes equally regardless of how many scenarios it contains.

\textbf{Confidence intervals via hierarchical bootstrap.}~~ Standard bootstrap CIs that resample only at the rollout level dramatically understate uncertainty because they miss the variance contributed by scenario difficulty and configuration effects (Section~\ref{sec:fragility}, Failure~3). We therefore propose a hierarchical bootstrap that resamples at each active level \emph{within} each app: (1)~resample $S_a$ scenarios with replacement, (2)~within each scenario, resample along each enabled environmental axis independently, and (3)~within each configuration, resample $R$ rollouts. In the absence of varied environment configurations, steps (1) and (3) remain and step (2) becomes a no-op, reducing to a stratified bootstrap analogous to \texttt{rliable} with \texttt{task\_bootstrap=True} \citep{agarwal2021deep}, but applied independently within each app. For each replicate $b$, we recompute $\hat{\theta}^{(b)}$ and construct the 95\% CI from the bootstrap quantiles. We provide a simulation study in Section~\ref{sec:simulations} that confirms all resampling levels are necessary (e.g., omitting scenario resampling produces under-coverage).

\section{\textsc{DigiWorld}: Following PRISM for Mobile Computer Use}
\label{sec:digiworld}

We now present \textsc{DigiWorld}, a benchmark designed to satisfy all five PRISM principles: it is \emph{realistic}, with each application faithfully reproducing real-world interaction patterns; it is \emph{sandboxed}, being fully self-contained with no external dependencies and featuring deterministic reset via internal application state modification; it is \emph{multifactorial}, supporting independent variation of data profiles, themes, UI states, and instance parameters; it is \emph{privileged}, allowing verification via parameterized SQL queries against internal application state; it is \emph{integrity-checked}, running automated feasibility checks and triviality filters to ensure every configuration is well-posed.

\textsc{DigiWorld} is made up of 15 hand-built, full-featured Android applications spanning 7 real-world domains. It comprises 387 scenarios (304 original single-app tasks plus 83 composed multi-step tasks) with over 4,300 task instances and a framework for defining and verifying tasks. Its hierarchical structure plugs directly into the evaluation methodology of Section~\ref{sec:metrics}.

\textbf{Fifteen apps across seven domains.}~~%
\label{sec:apps}%
The 15 applications span 7 real-world domains (Table~\ref{tab:apps}, Appendix~\ref{app:digiworld_stats}): communication (Email, Messaging), financial services (Payment, Banking), shopping (E-Commerce, QwikShop, Auction), entertainment (Music, Video), on-demand services (Eats, Ryde), travel (Flight Booking, Transit), and smart living (Smart Home, Parking). Each application is a fully-featured React Native Android application, hand-built for this benchmark, with a local database for the data state of the application, a JSON file representing its UI state, and a UI theming system.

\textbf{The scenario lifecycle.}~~%
\label{sec:scenarios}%
A \emph{scenario} defines a parameterized task template (e.g., ``Send \$\texttt{<amount>} to \texttt{<recipient>}'') together with its verification logic. Each scenario is instantiated into multiple \emph{instances} by binding concrete values to the parameters; each instance is compatible with a subset of data profiles, determined by automated feasibility checking (described below and detailed in Appendix~\ref{app:integrity}). A \emph{configuration} is the combination of an instance with a specific data profile, theme, and UI state. The lifecycle has three phases: (1)~\emph{reset} (inject the selected profile's database and instance-specific mockdata), (2)~\emph{execution} (the agent interacts through the standard Android GUI via screen-based actions (tap, type, scroll, with no programmatic API or tool access), and (3)~\emph{verification} (compare final application state to initial state). \textsc{DigiWorld} is agent-agnostic: any system that can send GUI events and receive screenshots can be evaluated. Appendix~\ref{app:scenario_lifecycle} provides full details.

\textbf{Integrity at scale.}~~%
\label{sec:integrity}%
\label{sec:context}%
\label{sec:feasibility}%
Multifactorial variation generates a vast configuration space (over 3.2 million configurations across the suite), making manual curation of valid combinations infeasible. Programmatic integrity checking is therefore a \emph{structural necessity} of any combinatorial benchmark. \textsc{DigiWorld} addresses this through three automated pipelines that run offline before any agent is evaluated: (1)~\emph{coherence checking}, which resolves template placeholders in instance-specific mockdata against each profile's actual database to ensure referential consistency; (2)~\emph{feasibility checking}, which evaluates parameter-dependent constraints (e.g., sufficient balance for a transfer) to produce per-instance lists of compatible profiles; and (3)~\emph{triviality filtering}, which excludes configurations where the task is already completed in the initial state. Together, these mechanisms enable the combinatorial design to scale without sacrificing integrity guarantees. Appendix~\ref{app:integrity} provides implementation details.

\section{Experiments}
\label{sec:experiments}

\subsection{Validating the bootstrap via simulation}
\label{sec:simulations}

Our evaluation methodology makes two complementary improvements: a better base-level estimator (Wilson score intervals) and a hierarchical aggregation that respects the nested data structure. \begin{figure}[t]
  \centering
  \includegraphics[width=\textwidth]{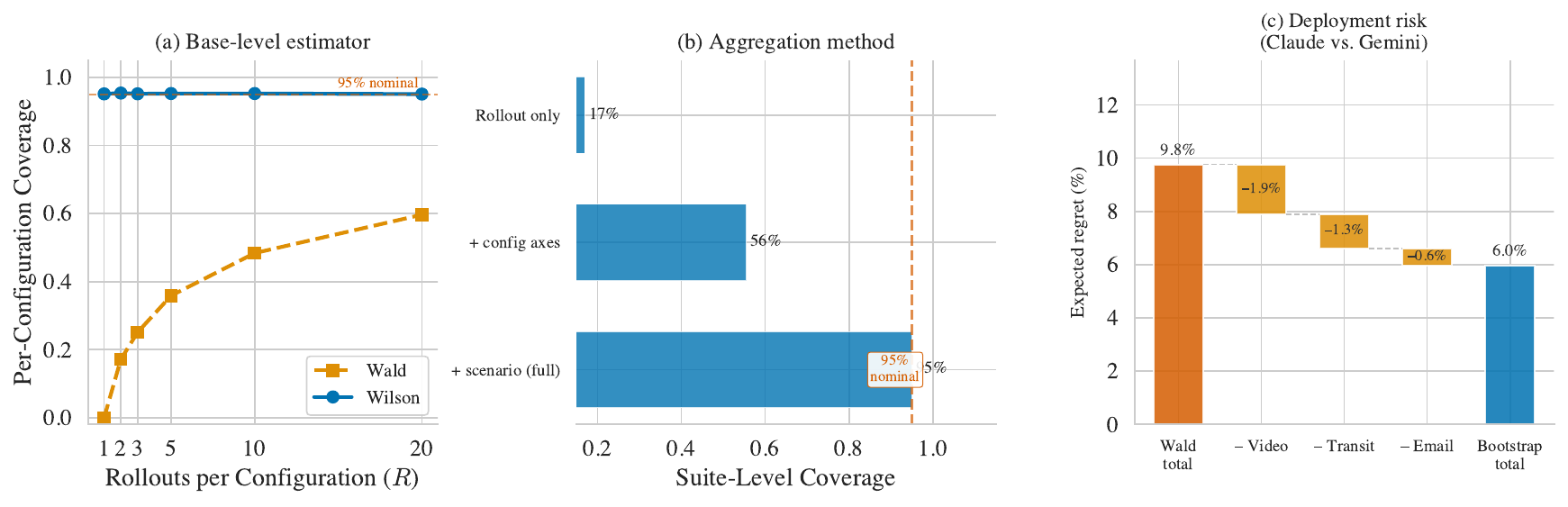}
  \caption{Validating the evaluation framework. \textbf{(a)}~Per-configuration coverage of Wald vs.\ Wilson intervals across rollout budgets $R$, with parameters calibrated from \textsc{DigiWorld} evaluations (Appendix~\ref{app:sim_calibration}). Wilson achieves near-nominal (95\%) coverage at all $R$; Wald under-covers severely, reaching only 25\% at $R{=}3$. \textbf{(b)}~Suite-level coverage under heterogeneous conditions. Each rung cumulatively adds one resampling level; only the full hierarchical bootstrap reaches 95\%. \textbf{(c)}~Per-app deployment regret (Claude vs.\ Gemini, Section~\ref{sec:splithalf}). Expected regret is the probability of deploying the wrong model times the true per-app gap. The three apps where only Wald CIs claim significance contribute 3.8\% of expected regret that the bootstrap prevents by correctly flagging them as uncertain.}
  \label{fig:simulation}
\end{figure}

\textbf{Base-level estimator.}~~ Figure~\ref{fig:simulation}a compares per-configuration coverage of Wald and Wilson intervals. True success probabilities are calibrated from \textsc{DigiWorld} evaluation data, which is heavily bimodal (${\sim}68\%$ of configurations at 0\% success, ${\sim}32\%$ at 100\%); we recover continuous true probabilities via a Jeffreys posterior (Appendix~\ref{app:sim_calibration}). At $R{=}3$ (the typical budget in CUA evaluation), the Wald interval covers the true value only 25\% of the time, because it produces degenerate zero-width CIs when $\hat{p} = 0$ or $\hat{p} = 1$. The Wilson interval maintains near-nominal coverage (95\%) across all $R$, including $R{=}1$, by producing non-degenerate intervals at the extremes.

\textbf{Hierarchical aggregation.}~~ Figure~\ref{fig:simulation}b presents a cumulative \emph{ladder} of bootstrap variants under heterogeneous conditions, with app-level success rates and noise parameters calibrated from \textsc{DigiWorld} evaluations (Appendix~\ref{app:sim_calibration}). A bootstrap that resamples only rollouts achieves just 17\% suite-level coverage. Adding configuration-axis resampling (capturing uncertainty about which environments the agent encounters) raises coverage to 56\%. Adding scenario resampling on top completes the full hierarchy at 95\% coverage, the only variant to reach the nominal level. Each level of the hierarchy captures genuine uncertainty that simpler methods miss.

\subsection{Experimental results}
\label{sec:splithalf}
\label{sec:decomposition}

\textbf{Evaluation protocol.}~~ For each of the 387 scenarios, we evaluate $R = 20$ rollouts per sampled configuration. The full configuration space exceeds 3.2 million; we sample configurations by drawing instances, data profiles, themes, and UI states from the available pools for each scenario. We use $B = 1{,}000$ bootstrap replicates for all confidence intervals.
We evaluate three frontier models: Claude Opus 4.6, GPT-5.4, and Gemini 3 Pro. All models were accessed via their public APIs in April 2026. Each model receives a task description, the current screenshot, and a set of available GUI actions (tap, type, scroll, press home/back, answer). No in-context demonstrations are provided.

\begin{figure}[t]
  \centering
  \includegraphics[width=\textwidth]{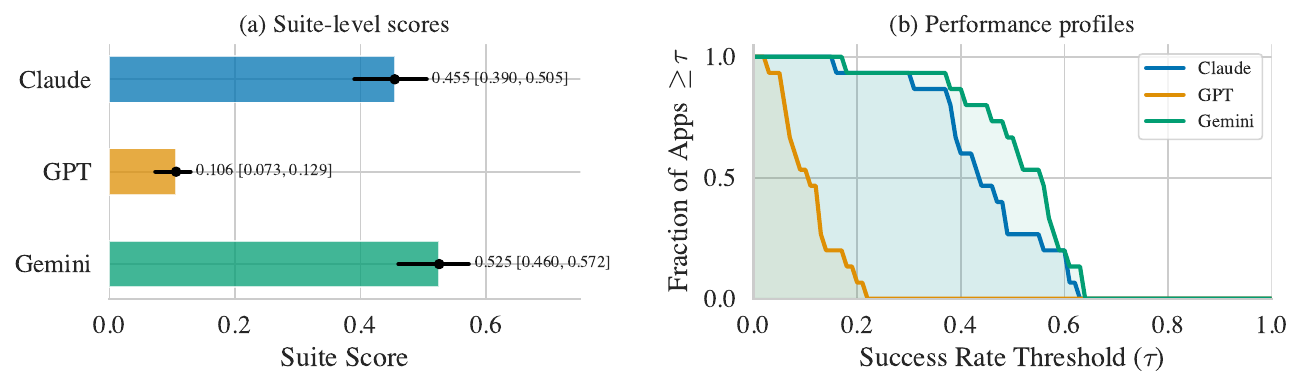}
  \caption{\textbf{(a)} Suite-level scores with 95\% hierarchical bootstrap CIs. Claude and Gemini overlap; GPT is clearly separated. \textbf{(b)} Performance profiles: each curve shows the fraction of apps on which a model achieves at least the success rate $\tau$.}
  \label{fig:results}
\end{figure}

\textbf{Aggregate results.}~~ Figure~\ref{fig:results}a shows suite-level scores with 95\% hierarchical bootstrap CIs. Claude and Gemini achieve similar scores with broadly overlapping CIs, indicating that the difference is not statistically meaningful; GPT struggles with mobile computer use as reported by previous work~\citep{kong2025}. Figure~\ref{fig:results}b shows performance profiles \citep{agarwal2021deep} across apps. Per-app results (Appendix~\ref{app:per_app}) show wide CIs even with thousands of rollouts, underscoring that app-level variance dominates the uncertainty budget. Results for additional model variants are in Appendix~\ref{app:additional_models}.

\textbf{Would na\"ive statistics get it wrong?}~~ Because Wald CIs underestimate uncertainty, they declare statistically significant differences for apps where the true gap is small and the ranking is fragile. To measure the downstream cost, we define \emph{expected regret} per app as the probability of selecting the inferior model (estimated via 500 split-half simulations at one configuration per scenario) multiplied by the true performance gap for that app. A practitioner who deploys whichever model a CI method declares ``significantly better'' incurs this regret on every app where the method is both overconfident and wrong. Figure~\ref{fig:simulation}c shows that the hierarchical bootstrap reduces total expected regret from 9.8\% to 6.0\% by correctly withholding significance on three apps whose rankings are too uncertain to act on. The 3.8\% of prevented regret is comparable to the gap between adjacent frontier models on current benchmarks, illustrating how miscalibrated CIs can silently erase real performance differences.

\textbf{Variability decomposition.}~~ The combinatorial structure of \textsc{DigiWorld} enables diagnostic analyses not possible on static benchmarks. We use matched-pair comparisons to isolate the effect of each environmental axis, summarizing instability via the mean absolute deviation (MAD) of performance shifts along axes (Appendix~\ref{app:variability}).
\begin{figure}[t]
  \centering
  \includegraphics[width=\textwidth]{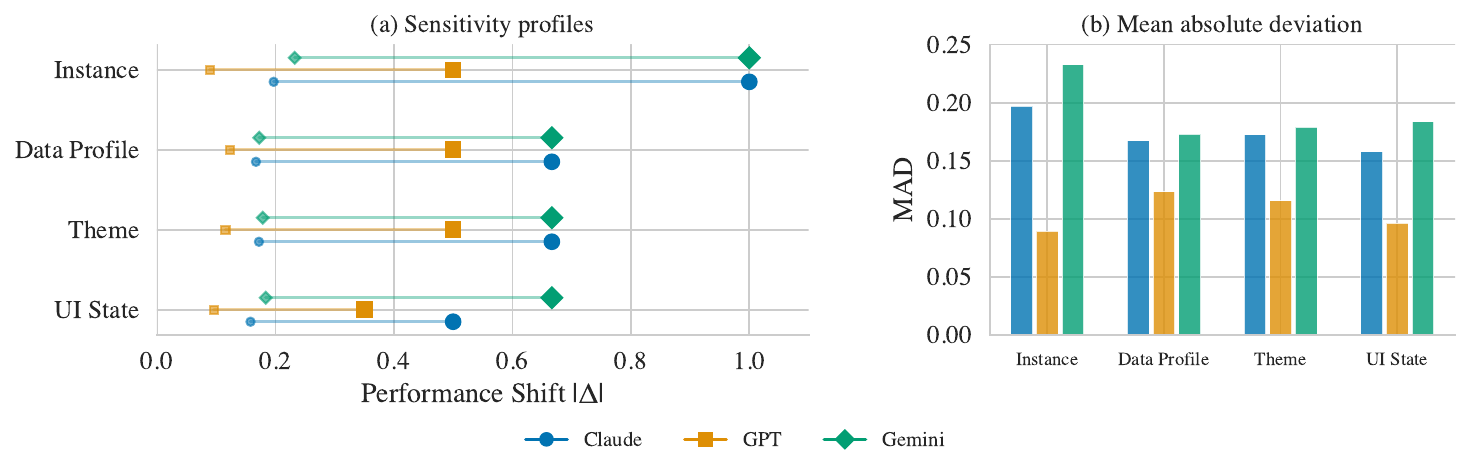}
  \caption{\textbf{(a)}~Sensitivity profiles: small marker = MAD (average instability), large marker = 90th percentile of $|\Delta|$ (worst case); longer lines indicate heavier tails. \textbf{(b)}~MAD per axis and model.}
  \label{fig:sensitivity}
\end{figure}
Figure~\ref{fig:sensitivity} shows that all axes produce MADs in the range 0.09--0.21, with no single axis dominating uniformly. This heterogeneity confirms that multifactorial variation is essential: a benchmark varying only one axis would miss substantial sources of instability.

\textbf{Replay resistance.}~~ Returning to the replay experiment of Section~\ref{sec:overfitting}, a replay agent only succeeds on \textsc{DigiWorld} (Table~\ref{tab:overfitting}) in simple navigation tasks (e.g., ``show contacts list,'' ``logout'') where the target UI element occupies a fixed position regardless of configuration. Instead, tasks requiring data-dependent reasoning (e.g., filtering, searching) fail almost completely. This confirms that the variability selectively defeats memorization while preserving solvability for genuine visual reasoning.

\section{Related work}
\label{sec:background}
\label{sec:discussion}

\textbf{CUA benchmarks.}~~ The past two years have seen a proliferation of CUA benchmarks. OSWorld \citep{xie2024osworld} provides tasks in a virtualized Ubuntu environment. AndroidWorld \citep{rawles2024} defines tasks on live Android applications. WebArena \citep{zhou2024webarena} and VisualWebArena \citep{koh2024visualwebarena} focus on web-based tasks. WorkArena \citep{drouin2024workarenacapablewebagents} targets enterprise software. REAL-bench \citep{garg2025real} constructs deterministic replicas of popular websites, WebVoyager \citep{he2024webvoyager} tests agents on live production websites, DigiData-Bench \citep{sun2025} evaluates mobile control agents on live apps, MobileWorld \citep{kong2025} benchmarks mobile agents with MCP-augmented tools, OpenApps \citep{ullrich2025} uses synthetic apps with explicit appearance variation, and MiniWoB++ \citep{liu2018} provides simplified web interaction tasks. While these benchmarks vary in scope and domain, they share common structural limitations; Table~\ref{tab:prism_eval} evaluates each against the PRISM criteria defined in Section~\ref{sec:principles}.

\textbf{Overfitting, exploitation, and statistical methodology.}~~ The risk of agents exploiting fixed environments is well-documented in games and robotics~\citep{zhang2018studyoverfittingdeepreinforcement, pmlr-v119-cobbe20a,tobin2017}, and environment unreliability has been observed for CUA~\citep{wang2026, stein2026detecting}. On the statistical side, \citet{agarwal2021deep} showed that point estimates without robust aggregation produce unreliable agent rankings, proposing a framework with stratified bootstrap CIs. Our work extends this line of work to CUA, where the data hierarchy is deeper and binary outcomes require different estimators. Specifically, we use Wilson score intervals~\citep{wilson1927, brown2001, capuano2026} and a hierarchical bootstrap to correctly handle the nested structure.

\section{Limitations, broader impact, and conclusion}
\label{sec:conclusion}

\textbf{Limitations.}~~ PRISM operates at the level of environment and benchmark design; it does not address lower-level concerns such as harness design, task-specific verifier engineering, or the practical challenges of sandboxing complex environments. These are essential but can only be addressed reasonably atop principled foundations. \textsc{DigiWorld} currently targets only mobile (Android) applications; extending PRISM to desktop and web environments requires significant engineering effort, as each platform demands its own sandboxing strategy, verification mechanisms, and application framework.

\textbf{Broader impact.}~~ We hope our work can redirect effort from benchmark-specific optimizations that may not transfer to deployment toward genuine understanding of agent capabilities. The variability decomposition framework provides a template for understanding \emph{why} agents fail, not just \emph{how often}.

We have shown that the CUA evaluation ecosystem suffers from two compounding pathologies: non-principled environment designs and evaluation methodologies. We address both through the PRISM design principles instantiated in \textsc{DigiWorld}, and through a hierarchical aggregation framework that produces confidence intervals by correctly modeling the nested data structure. The PRISM framework is not specific to \textsc{DigiWorld}: any interactive benchmark that satisfies these five principles can use the same hierarchical aggregation methodology. Together, these contributions provide the methodological foundation for CUA evaluation that is reliable, diagnostic, and resistant to gaming.


\bibliographystyle{assets/plainnat}
\bibliography{references}

@inproceedings{agarwal2021deep,
  title={Deep Reinforcement Learning at the Edge of the Statistical Precipice},
  author={Agarwal, Rishabh and Schwarzer, Max and Castro, Pablo Samuel and Courville, Aaron C. and Bellemare, Marc},
  booktitle={Advances in Neural Information Processing Systems},
  volume={34},
  pages={29304--29320},
  year={2021},
  url={https://proceedings.neurips.cc/paper_files/paper/2021/file/f514cec81cb148559cf475e7426eed5e-Paper.pdf}
}

@article{brown2001,
  author    = {Brown, Lawrence D. and Cai, T. Tony and DasGupta, Anirban},
  title     = {Interval Estimation for a Binomial Proportion},
  journal   = {Statistical Science},
  volume    = {16},
  number    = {2},
  pages     = {101--133},
  year      = {2001},
  doi       = {10.1214/ss/1009213286}
}

@misc{capuano2026,
  title = {A primer on measuring the uncertainty of success rates},
  author = {Capuano, Francesco},
  year = {2026},
  month = {March},
  howpublished = {Blog post},
  url = {https://fracapuano.github.io/blog/success-rates}
}

@misc{chen2021evaluatinglarge,
      title={Evaluating Large Language Models Trained on Code},
      author={Mark Chen and Jerry Tworek and Heewoo Jun and Qiming Yuan and Henrique Ponde de Oliveira Pinto and Jared Kaplan and Harri Edwards and Yuri Burda and Nicholas Joseph and Greg Brockman and Alex Ray and Raul Puri and Gretchen Krueger and Michael Petrov and Heidy Khlaaf and Girish Sastry and Pamela Mishkin and Brooke Chan and Scott Gray and Nick Ryder and Mikhail Pavlov and Alethea Power and Lukasz Kaiser and Mohammad Bavarian and Clemens Winter and Philippe Tillet and Felipe Petroski Such and Dave Cummings and Matthias Plappert and Fotios Chantzis and Elizabeth Barnes and Ariel Herbert-Voss and William Hebgen Guss and Alex Nichol and Alex Paino and Nikolas Tezak and Jie Tang and Igor Babuschkin and Suchir Balaji and Shantanu Jain and William Saunders and Christopher Hesse and Andrew N. Carr and Jan Leike and Josh Achiam and Vedant Misra and Evan Morikawa and Alec Radford and Matthew Knight and Miles Brundage and Mira Murati and Katie Mayer and Peter Welinder and Bob McGrew and Dario Amodei and Sam McCandlish and Ilya Sutskever and Wojciech Zaremba},
      year={2021},
      eprint={2107.03374},
      archivePrefix={arXiv},
      primaryClass={cs.LG}
}

@InProceedings{pmlr-v119-cobbe20a,
  title = {Leveraging Procedural Generation to Benchmark Reinforcement Learning},
  author = {Cobbe, Karl and Hesse, Chris and Hilton, Jacob and Schulman, John},
  booktitle = {Proceedings of the 37th International Conference on Machine Learning},
  pages = {2048--2056},
  year = {2020},
  editor = {III, Hal Daumé and Singh, Aarti},
  volume = {119},
  series = {Proceedings of Machine Learning Research},
  month = {13--18 Jul},
  publisher = {PMLR},
  url = {https://proceedings.mlr.press/v119/cobbe20a.html}
}

@misc{drouin2024workarenacapablewebagents,
      title={WorkArena: How Capable Are Web Agents at Solving Common Knowledge Work Tasks?},
      author={Alexandre Drouin and Maxime Gasse and Massimo Caccia and Issam H. Laradji and Manuel Del Verme and Tom Marty and Léo Boisvert and Megh Thakkar and Quentin Cappart and David Vazquez and Nicolas Chapados and Alexandre Lacoste},
      year={2024},
      eprint={2403.07718},
      archivePrefix={arXiv},
      primaryClass={cs.LG},
      url={https://arxiv.org/abs/2403.07718},
}

@article{garg2025real,
  author  = {Divyansh Garg and Shaun VanWeelden and Diego Caples and Andis Draguns and Nikil Ravi and Pranav Putta and Naman Garg and Tomas Abraham and Michael Lara and Federico Lopez and James Liu and Atharva Gundawar and Prannay Hebbar and Youngchul Joo and Jindong Gu and Charles London and Christian Schroeder de Witt and Sumeet Motwani},
  title   = {REAL: Benchmarking Autonomous Agents on Deterministic Simulations of Real Websites},
  journal = {arXiv preprint arXiv:2504.11543},
  year    = {2025},
}

@misc{he2024webvoyager,
      title={WebVoyager: Building an End-to-End Web Agent with Large Multimodal Models},
      author={Hongliang He and Wenlin Yao and Kaixin Ma and Wenhao Yu and Yong Dai and Hongming Zhang and Zhenzhong Lan and Dong Yu},
      year={2024},
      eprint={2401.13919},
      archivePrefix={arXiv},
      primaryClass={cs.CL}
}

@inproceedings{koh2024visualwebarena,
    title = "{V}isual{W}eb{A}rena: Evaluating Multimodal Agents on Realistic Visual Web Tasks",
    author = "Koh, Jing Yu and Lo, Robert and Jang, Lawrence and Duvvur, Vikram and Lim, Ming Chong and Huang, Po-Yu and Neubig, Graham and Zhou, Shuyan and Salakhutdinov, Ruslan and Fried, Daniel",
    booktitle = "Proceedings of the 62nd Annual Meeting of the Association for Computational Linguistics (Volume 1: Long Papers)",
    month = "aug",
    year = "2024",
    address = "Bangkok, Thailand",
    publisher = "Association for Computational Linguistics",
    url = "https://aclanthology.org/2024.acl-long.50",
    doi = "10.18653/v1/2024.acl-long.50",
    pages = "881--905"
}

@misc{kong2025,
      title={MobileWorld: Benchmarking Autonomous Mobile Agents in Agent-User Interactive and MCP-Augmented Environments},
      author={Quyu Kong and Xu Zhang and Zhenyu Yang and Nolan Gao and Chen Liu and Panrong Tong and Chenglin Cai and Hanzhang Zhou and Jianan Zhang and Liangyu Chen and Zhidan Liu and Steven Hoi and Yue Wang},
      year={2025},
      eprint={2512.19432},
      archivePrefix={arXiv},
      primaryClass={cs.CL}
}

@inproceedings{liu2018,
  author    = {Evan Zheran Liu and Kelvin Guu and Panupong Pasupat and Tianlin Shi and Percy Liang},
  title     = {Reinforcement Learning on Web Interfaces Using Workflow-Guided Exploration},
  booktitle = {International Conference on Learning Representations},
  year      = {2018},
  url       = {https://arxiv.org/abs/1802.08802},
}

@misc{rawles2024,
      title={AndroidWorld: A Dynamic Benchmarking Environment for Autonomous Agents},
      author={Christopher Rawles and Sarah Clinckemaillie and Yifan Chang and Jonathan Waltz and Gabrielle Lau and Marybeth Fair and Alice Li and William Bishop and Wei Li and Folawiyo Campbell-Ajala and Daniel Toyama and Robert Berry and Divya Tyamagundlu and Timothy Lillicrap and Oriana Riva},
      year={2024},
      eprint={2405.14573},
      archivePrefix={arXiv},
      primaryClass={cs.AI}
}

@article{stein2026detecting,
  title={Detecting Safety Violations Across Many Agent Traces},
  author={Stein, Adam and Brown, Davis and Hassani, Hamed and Naik, Mayur and Wong, Eric},
  journal={arXiv preprint arXiv:2604.11806},
  year={2026}
}

@misc{sun2025,
      title={DigiData: Training and Evaluating General-Purpose Mobile Control Agents},
      author={Yuxuan Sun and Manchen Wang and Shengyi Qian and William R. Wong and Eric Gan and Pierluca D'Oro and Alejandro Castillejo Munoz and Sneha Silwal and Pedro Matias and Nitin Kamra and Satwik Kottur and Nick Raines and Xuanyi Zhao and Joy Chen and Joseph Greer and Andrea Madotto and Allen Bolourchi and James Valori and Kevin Carlberg and Karl Ridgeway and Joseph Tighe},
      year={2025},
      eprint={2511.07413},
      archivePrefix={arXiv},
      primaryClass={cs.AI}
}

@inproceedings{tobin2017,
  author    = {Josh Tobin and Rachel Fong and Alex Ray and Jonas Schneider and Wojciech Zaremba and Pieter Abbeel},
  title     = {Domain Randomization for Transferring Deep Neural Networks from Simulation to the Real World},
  booktitle = {2017 IEEE/RSJ International Conference on Intelligent Robots and Systems (IROS)},
  pages     = {23--30},
  year      = {2017},
  doi       = {10.1109/IROS.2017.8202133},
  eprint    = {1703.06907},
  archivePrefix = {arXiv},
}

@article{ullrich2025,
  author    = {Karen Ullrich and Jingtong Su and Claudia Shi and Arjun Subramonian and Amir Bar and Ivan Evtimov and Nikolaos Tsilivis and Randall Balestriero and Julia Kempe and Mark Ibrahim},
  title     = {OpenApps: Simulating Environment Variations to Measure UI-Agent Reliability},
  journal   = {arXiv preprint arXiv:2511.20766},
  year      = {2025},
  eprint    = {2511.20766},
  archivePrefix = {arXiv},
  primaryClass = {cs.AI}
}

@misc{wang2026,
  author = {Hao Wang and Qiuyang Mang and Alvin Cheung and Koushik Sen and Dawn Song},
  title = {How We Broke Top AI Agent Benchmarks: And What Comes Next},
  howpublished = {Blog post},
  month = {April},
  year = {2026},
  url = {https://rdi.berkeley.edu/blog/trustworthy-benchmarks-cont/},
  note = {Accessed on 2026-04-16}
}

@article{wilson1927,
  title={Probable inference, the law of succession, and statistical inference},
  author={Wilson, Edwin B.},
  journal={Journal of the American Statistical Association},
  volume={22},
  number={158},
  pages={209--212},
  year={1927},
  publisher={Taylor & Francis},
  doi={10.1080/01621459.1927.10502953}
}

@inproceedings{xie2024osworld,
  title={OSWorld: Benchmarking Multimodal Agents for Open-Ended Tasks in Real Computer Environments},
  author={Xie, Tianbao and Zhang, Danyang and Chen, Jixuan and Li, Xiaochuan and Zhao, Siheng and Cao, Ruisheng and Hua, Toh Jing and Cheng, Zhoujun and Shin, Dongchan and Lei, Fangyu and Liu, Yitao and Xu, Yiheng and Zhou, Shuyan and Savarese, Silvio and Xiong, Caiming and Zhong, Victor and Yu, Tao},
  booktitle={Advances in Neural Information Processing Systems},
  year={2024},
  eprint={2404.07972},
  archivePrefix={arXiv},
  primaryClass={cs.AI}
}

@misc{zhang2018studyoverfittingdeepreinforcement,
      title={A Study on Overfitting in Deep Reinforcement Learning},
      author={Chiyuan Zhang and Oriol Vinyals and Remi Munos and Samy Bengio},
      year={2018},
      eprint={1804.06893},
      archivePrefix={arXiv},
      primaryClass={cs.LG},
      url={https://arxiv.org/abs/1804.06893},
}

@misc{zheng2023judging,
  title={Judging LLM-as-a-Judge with MT-Bench and Chatbot Arena},
  author={Lianmin Zheng and Wei-Lin Chiang and Ying Sheng and Siyuan Zhuang and Zhanghao Wu and Yonghao Zhuang and Zi Lin and Zhuohan Li and Dacheng Li and Eric P. Xing and Hao Zhang and Joseph E. Gonzalez and Ion Stoica},
  year={2023},
  eprint={2306.05685},
  archivePrefix={arXiv},
  primaryClass={cs.CL}
}

@inproceedings{zhou2024webarena,
  title={WebArena: A Realistic Web Environment for Building Autonomous Agents},
  author={Shuyan Zhou and Frank F. Xu and Hao Zhu and Xuhui Zhou and Robert Lo and Abishek Sridhar and Xianyi Cheng and Tianyue Ou and Yonatan Bisk and Daniel Fried and Uri Alon and Graham Neubig},
  booktitle={The Twelfth International Conference on Learning Representations},
  year={2024},
  url={https://openreview.net/forum?id=oKn9c6ytLx}
}


\newpage
\beginappendix

\section{Proof of Replay Equivalence}
\label{app:replay_proof}

\begin{proof}[Proof of Remark~\ref{rem:replay}]
\textbf{Part 1: $\mathbb{E}[\mathrm{SR}(\pi_R)] = \mathrm{pass@}k(\pi)$.}
Fix a task $i$. The $k$ rollouts are independent Bernoulli trials with success probability $p_i$. The replay policy $\pi_R$ succeeds on task $i$ if and only if at least one rollout succeeded (i.e., a successful trajectory exists to replay). Because the benchmark is deterministic and initializes task $i$ to the same state $s_i^0$ every time, replaying a previously successful action sequence is guaranteed to succeed again. Therefore:
\[
P(\pi_R \text{ succeeds on task } i) = 1 - (1 - p_i)^k.
\]
Averaging over all $N$ tasks gives $\mathbb{E}[\mathrm{SR}(\pi_R)] = \mathrm{pass@}k(\pi)$.

\medskip

\textbf{Part 2: $\mathrm{SR}(\pi) \leq \mathrm{pass@}k(\pi)$.}
It suffices to show $p_i \leq 1 - (1 - p_i)^k$ for each task. Let $q = 1 - p_i \in [0,1]$. We need $q^k \leq q$. For $q \in [0,1]$ and $k \geq 1$, this holds because $q^k = q \cdot q^{k-1} \leq q \cdot 1 = q$, with strict inequality when $q \in (0,1)$ and $k \geq 2$.
\end{proof}

\begin{proof}[Proof of Corollary~\ref{cor:bruteforce}]
For any $p_i > 0$, $(1 - p_i)^k \to 0$ as $k \to \infty$. Therefore $\mathbb{E}[\mathrm{SR}(\pi_R)] = \frac{1}{N}\sum_{i=1}^{N}[1 - (1-p_i)^k] \to 1$.
\end{proof}

\section{Replay agent evaluation details}
\label{app:replay_details}

For each benchmark in Table~\ref{tab:overfitting}, the replay agent stores successful trajectories from the best-performing frontier model on that benchmark: Claude Opus 4.6 for \textsc{OSWorld} and \textsc{DigiWorld}, and Gemini 3 Pro for \textsc{MobileWorld}. The ``frontier model'' row in the table reports that same model's success rate.

\subsection{MobileWorld}

From the 201 tasks in \textsc{MobileWorld}, we evaluate on a subset of 96 tasks. We exclude 105 tasks that our evaluation infrastructure cannot reliably support or that are instrumented inconsistently in \textsc{MobileWorld} itself. Specifically: (i)~40 tasks tagged \texttt{agent-mcp} require external MCP cloud APIs (Amap for maps/routing, Stockstar for financial data, arXiv/GitHub for information retrieval) for which we lack credentials; (ii)~41 Mastodon tasks and 16 Mattermost tasks rely on Docker-in-Docker backend services which were unstable on our virtual machines; and (iii)~8 mall shopping tasks fail deterministically across every run of every model with ``No callback data found,'' indicating a broken callback mechanism in the app. The remaining 96 tasks span 7 app categories (Calendar, Mail, Messages, Maps, Chrome, Files/Docreader, Taodian, Camera/Gallery, Clock, Settings, Contacts) and cover the full range of \textsc{MobileWorld}'s task complexity, including web-browsing tasks that require live internet access.

\section{Cost considerations}
\label{app:cost}

The combinatorial evaluation protocol we propose is substantially more expensive than evaluating on a static benchmark: over $8{,}000$ configurations per scenario vs.\ 1. We argue that this cost is the price of \emph{meaningful} evaluation. A cheap evaluation that does not measure what it claims to measure is not a bargain. Repeating rollouts on a single fixed configuration yields a tight confidence interval on \emph{that configuration's} success rate, but this is not the quantity of interest: what matters is the agent's expected performance across the full space of realistic conditions. The combinatorial design measures this more meaningful quantity by sampling diverse configurations, at the cost of more total rollouts. In practice, random sampling of configurations (rather than exhaustive enumeration) keeps the cost tractable: 387 scenarios $\times$ $R{=}20$ rollouts $= 7{,}740$ rollouts per model, parallelizable across containers to a few hours of wall time.

\section{Additional simulation results}
\label{app:simulations}

\subsection{Full coverage tables}

Table~\ref{tab:sim_full} reports the complete simulation results across all combinations of rollout counts and resampling strategies, including the heterogeneous within-app condition.

\begin{table}[h]
  \caption{Extended simulation results ($S{=}10$ scenarios/app, $3{\times}3{\times}3$ configurations, 95\% CI). Under heterogeneous within-app noise ($\sigma_{\text{scen}} = 0.08$, $\sigma_{\text{config}} = 0.03$), rollout-only coverage collapses to 60\%.}
  \label{tab:sim_full}
  \centering
  \small
  \begin{tabular}{@{}llcc@{}}
    \toprule
    Condition & Resampling & Coverage & Width \\
    \midrule
    \multirow{4}{*}{Homogeneous, $R{=}3$}
      & Scen + Config + Roll & 1.000 & 0.061 \\
      & Scen + Config        & 1.000 & 0.046 \\
      & Scen + Roll          & 1.000 & 0.024 \\
      & Roll only            & 0.860 & 0.013 \\
    \midrule
    \multirow{2}{*}{Heterogeneous, $R{=}3$}
      & Scen + Config + Roll & 1.000 & 0.065 \\
      & Roll only            & \textbf{0.600} & 0.013 \\
    \bottomrule
  \end{tabular}
\end{table}

\subsection{Sensitivity to number of bootstrap replicates}

We test $B \in \{100, 300, 500, 1000, 2000\}$ bootstrap replicates for the full hierarchical bootstrap with $R = 3$. Coverage remains at 100\% for all values of $B \geq 100$. CI width stabilizes at $B = 300$ (change $< 0.001$ for larger $B$), justifying our default choice of $B = 300$ for simulations and $B = 1{,}000$ for final evaluations.

\section{Detailed \textsc{DigiWorld} statistics}
\label{app:digiworld_stats}

\begin{table}[h]
  \caption{The \textsc{DigiWorld} application suite. 15 apps across 7 domains, each with ${\sim}10$ data profiles, ${\sim}10$ themes, and ${\sim}10$ UI states.}
  \label{tab:apps}
  \centering
  \small
  \begin{tabular}{llll}
    \toprule
    Domain & Applications & Task types & Scenarios \\
    \midrule
    Communication & Email, Messaging & Send, search, organize & 39 \\
    Financial & Payment, Banking & Transfer, check balance, manage accounts & 64 \\
    Shopping & E-Commerce, QwikShop, Auction & Browse, purchase, bid, review & 115 \\
    Entertainment & Music, Video & Play, search, manage playlists & 42 \\
    On-Demand & Eats, Ryde & Order food, book rides & 34 \\
    Travel & Flight Booking, Transit & Search flights, plan routes & 49 \\
    Smart Living & Smart Home, Parking & Control devices, find parking & 44 \\
    \midrule
    \multicolumn{3}{l}{\textbf{Total (original + composed)}} & \textbf{387} \\
    \bottomrule
  \end{tabular}
\end{table}

\begin{table}[h]
  \caption{Per-app scenario counts and task type distribution in \textsc{DigiWorld}. Action tasks require the agent to modify application state; information retrieval (IR) tasks require the agent to extract and report information.}
  \label{tab:scenario_stats}
  \centering
  \small
  \begin{tabular}{lccc}
    \toprule
    Application & Action & IR & Total \\
    \midrule
    Email & 12 & 7 & 19 \\
    Messaging & 19 & 1 & 20 \\
    Payment & 16 & 11 & 27 \\
    Banking & 19 & 18 & 37 \\
    E-Commerce & 18 & 40 & 58 \\
    QwikShop & 22 & 2 & 24 \\
    Auction & 27 & 6 & 33 \\
    Music & 14 & 0 & 14 \\
    Video & 12 & 16 & 28 \\
    Eats & 9 & 5 & 14 \\
    Ryde & 11 & 9 & 20 \\
    Flight Booking & 7 & 10 & 17 \\
    Transit & 13 & 19 & 32 \\
    Smart Home & 21 & 6 & 27 \\
    Parking & 17 & 0 & 17 \\
    \midrule
    \textbf{Total} & \textbf{237} & \textbf{150} & \textbf{387} \\
    \bottomrule
  \end{tabular}
\end{table}

\section{Sources of variability in \textsc{DigiWorld}}
\label{app:variability_axes}

\begin{table}[h]
  \caption{The five sources of variability in \textsc{DigiWorld}. The four environmental axes are independently toggleable; rollout stochasticity is always present. When all environmental axes are disabled, the benchmark is deterministic and the only uncertainty comes from the agent's sampling.}
  \label{tab:variability_axes}
  \centering
  \footnotesize
  \resizebox{\textwidth}{!}{%
  \begin{tabular}{llll}
    \toprule
    Source & Type & What varies & Example \\
    \midrule
    Instance params & Environmental & Task parameters (amounts, recipients, items) & ``Send \$50'' vs.\ ``Send \$200'' \\
    Data profile    & Environmental & Database content (accounts, messages, history) & 3 contacts vs.\ 50 contacts \\
    Theme           & Environmental & Visual styling (colors, fonts, layout density) & Light mode vs.\ dark mode \\
    UI state        & Environmental & Starting screen and navigation state & Home screen vs.\ settings page \\
    \midrule
    Rollouts        & Algorithmic   & Agent's stochastic sampling (seed, temperature) & Same screen, different actions \\
    \bottomrule
  \end{tabular}%
  }
\end{table}

\section{Scenario lifecycle details}
\label{app:scenario_lifecycle}

A \emph{scenario} in \textsc{DigiWorld} defines a parameterized task template (e.g., ``Send \$\texttt{<amount>} to \texttt{<recipient>}'') together with its verification logic. Each scenario is instantiated into multiple \emph{instances} by binding concrete values to the template parameters (e.g., ``Send \$50 to Alice,'' ``Send \$200 to Bob''). Instance randomization (Table~\ref{tab:variability_axes}) controls whether a fixed default instance or a randomly selected instance is used at evaluation time. Each instance is compatible with a subset of the available data profiles, determined by automated feasibility checking (Section~\ref{sec:integrity}). A \emph{configuration} is then the combination of an instance with a specific data profile, theme, and UI state drawn from whichever environmental axes are enabled. The scenario lifecycle proceeds in three phases:

\textbf{1. Reset.}~~ The framework selects a compatible profile, pushes the corresponding database and UI state files to the emulator via ADB, and injects instance-specific mockdata through a context extraction pipeline that resolves template placeholders against the profile's database (Section~\ref{sec:integrity}).

\textbf{2. Execution.}~~ The agent interacts with the application through the standard Android interface. \textsc{DigiWorld} is agent-agnostic: any system that can send touch/type events to an Android emulator and receive screenshots can be evaluated.

\textbf{3. Verification.}~~ The framework captures the final application state and compares it to the initial state using parameterized SQL queries. For \emph{action} tasks (e.g., sending an email), verification checks that the expected records were created or modified. For \emph{information retrieval} tasks (e.g., ``What is the account balance?''), the framework extracts the ground-truth answer from the database and compares it to the agent's response using flexible matchers that handle numeric formats, date representations, durations, and Unicode normalization.

\section{Variability decomposition: methodology and per-app details}
\label{app:variability}

For a given axis (e.g., theme), we construct matched pairs of configurations that differ only along that axis. Let $c_1 = (i, d, t_1, u)$ and $c_2 = (i, d, t_2, u)$ be two configurations with the same instance $i$, data profile $d$, and UI state $u$ but different themes $t_1$ and $t_2$. The performance difference $\Delta = \hat{p}_{c_1} - \hat{p}_{c_2}$ isolates the effect of the theme change. The same methodology applies symmetrically to all four environmental axes. We summarize the instability along each axis using the \textbf{Mean Absolute Deviation (MAD)}:
\begin{equation}
\text{MAD} = \frac{1}{|\mathcal{P}|}\sum_{(c_1,c_2) \in \mathcal{P}} |\Delta_{c_1,c_2}|
\label{eq:mad}
\end{equation}
which measures the average magnitude of performance shifts when a single axis changes, regardless of direction.

\begin{figure}[h]
  \centering
  \includegraphics[width=\textwidth]{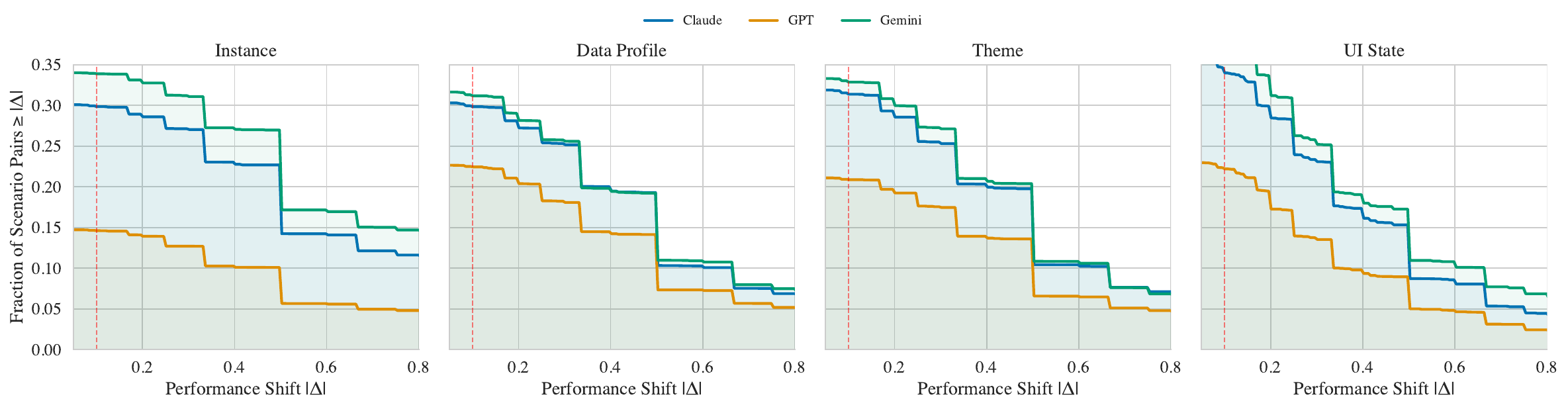}
  \caption{Exceedance curves for scenario-level performance shifts, per model and environmental axis. For each axis, we compute per-scenario success rates under each axis value (averaging across all other axes), take pairwise $|\Delta|$, and plot the fraction of pairs exceeding each threshold. Higher curves indicate greater sensitivity. The dashed red line marks $|\Delta|{=}0.10$.}
  \label{fig:variability}
\end{figure}

\begin{table}[h]
  \caption{Per-app MAD (Mean Absolute Deviation) for Claude across each variability axis.}
  \label{tab:per_app_variability}
  \centering
  \small
  \begin{tabular}{lcccc}
    \toprule
    Application & Instance MAD & Data MAD & Theme MAD & UI State MAD \\
    \midrule
    Email       & .200 & .190 & .138 & .170 \\
    Messaging   & .165 & .136 & .133 & .087 \\
    Payment     & .146 & .123 & .145 & .123 \\
    Banking     & .156 & .132 & .116 & .137 \\
    E-Commerce  & .233 & .222 & .224 & .211 \\
    QwikShop    & .112 & .073 & .051 & .061 \\
    Auction     & .083 & .133 & .144 & .129 \\
    Music       & .249 & .274 & .203 & .229 \\
    Video       & .346 & .300 & .308 & .244 \\
    Eats        & .109 & .104 & .090 & .065 \\
    Ryde        & .140 & .157 & .118 & .107 \\
    Flight      & .162 & .170 & .156 & .081 \\
    Transit     & .290 & .239 & .249 & .246 \\
    Smart Home  & .247 & .204 & .223 & .181 \\
    Parking     & .154 & .110 & .114 & .084 \\
    \midrule
    \textbf{Mean} & \textbf{.186} & \textbf{.171} & \textbf{.161} & \textbf{.144} \\
    \bottomrule
  \end{tabular}
\end{table}

\section{Per-app performance}
\label{app:per_app}

\begin{figure}[h]
  \centering
  \includegraphics[width=\textwidth]{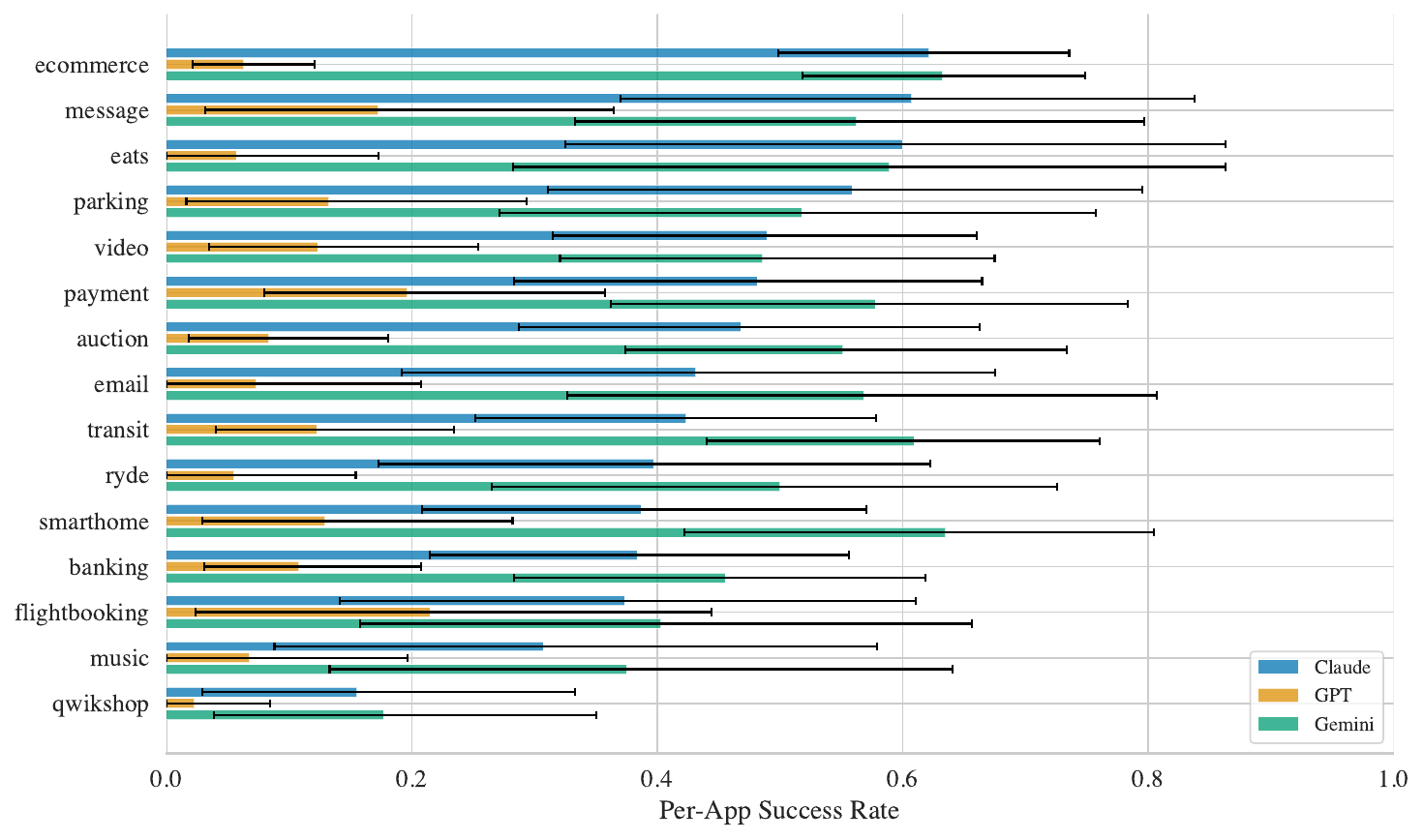}
  \caption{Per-app success rates for all three models, with 95\% hierarchical bootstrap CIs. Apps are ordered by Claude's performance.}
  \label{fig:per_app}
\end{figure}

\section{Additional model results}
\label{app:additional_models}

We additionally evaluate two non-frontier model variants---Claude Sonnet 4.6 and Gemini 3 Flash---to provide a broader picture of performance across model tiers. Figure~\ref{fig:results_appendix} shows suite-level scores and performance profiles for all five models.

\begin{figure}[h]
  \centering
  \includegraphics[width=\textwidth]{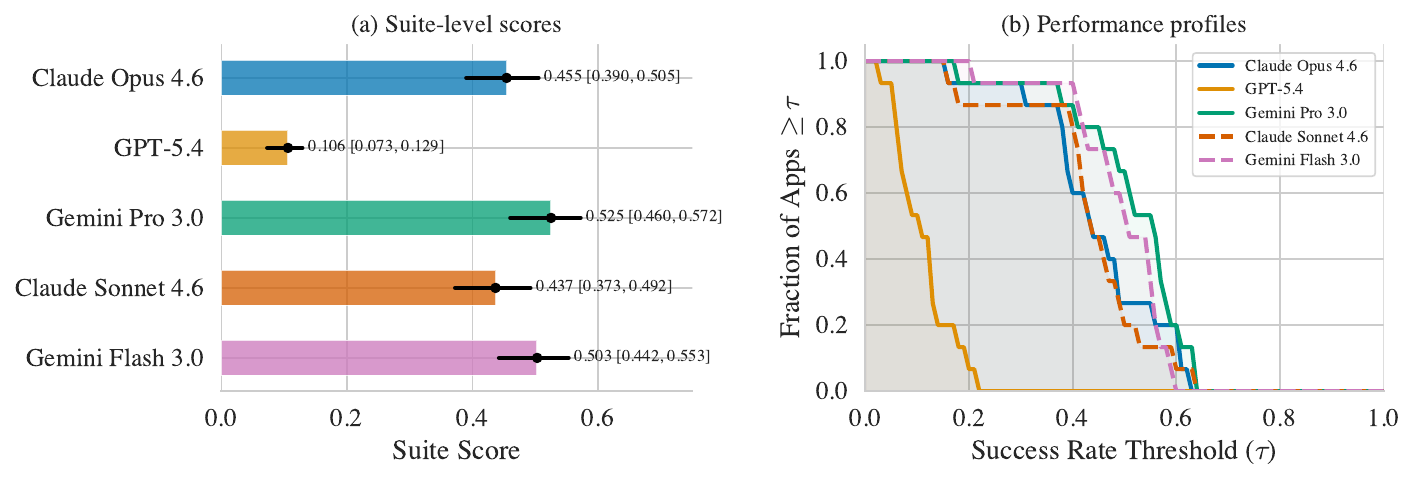}
  \caption{Suite-level scores and performance profiles for all five evaluated models. Solid lines in (b) denote the three frontier models from the main evaluation (Figure~\ref{fig:results}); dashed lines denote the additional non-frontier variants (Claude Sonnet 4.6 and Gemini 3 Flash).}
  \label{fig:results_appendix}
\end{figure}

\section{Integrity checking implementation details}
\label{app:integrity}

This appendix provides the implementation details for the three integrity mechanisms described in Section~\ref{sec:integrity}.

\subsection{Context extraction and grounded data injection}

Mockdata files use \texttt{\{\{placeholder\}\}} tokens that are resolved against the profile's SQLite database at reset time. Placeholders fall into three categories:

\begin{itemize}[leftmargin=*,itemsep=1pt]
  \item \textbf{Identity templates} extract values matching the profile's user identity: \texttt{\{\{current\_user\_email\}\}}, \texttt{\{\{current\_user\_id\}\}}, \texttt{\{\{current\_user\_name\}\}}.
  \item \textbf{Relational templates} extract values from existing records to ensure referential consistency: \texttt{\{\{first\_room\_id\}\}} resolves to an actual room ID in the smart home database, \texttt{\{\{context\_sender:work\}\}} resolves to a contact classified as ``work'' in the email app.
  \item \textbf{Positioning templates} compute values \emph{relative to} existing data: \texttt{\{\{middle\_transaction\_time\}\}} queries the transactions table, sorts by timestamp, and returns a time midway through the list. A positioning service implements strategies (\texttt{Beginning}, \texttt{Middle}, \texttt{End}) that place injected records at natural positions within the existing data distribution.
\end{itemize}

Each application defines its own template resolver that extends a base resolver with app-specific templates (e.g., \texttt{\{\{safe\_amount\}\}} in the payment app resolves to an amount guaranteed to be below the wallet balance). The resolver walks the mockdata structure recursively, resolving every placeholder before the data is merged into the profile's database. The same mechanism resolves the \emph{scenario context}, i.e., structured metadata (e.g., user PIN, account number) passed to the verification system for ground-truth comparison.

This design ensures that the \emph{same} scenario instance produces \emph{different} concrete tasks on different profiles: ``find the most recent transaction'' yields different amounts, dates, and descriptions depending on which profile's transaction history is loaded.

\subsection{Constraint-based feasibility checking}

The feasibility pipeline evaluates each profile--instance pair offline and produces a per-instance list of compatible profiles. Four constraint types cover the space of preconditions:

\begin{itemize}[leftmargin=*,itemsep=1pt]
  \item \texttt{EntityExists}: the profile's database must contain at least $n$ rows in a table, optionally filtered by user or column value (e.g., ``at least 1 account for the current user'').
  \item \texttt{DataVolume}: a stricter variant requiring at least $n$ rows matching a compound filter (e.g., ``at least 2 completed transactions'').
  \item \texttt{Balance}: a numeric field must meet a threshold that can depend on the instance parameters (e.g., ``wallet balance $\geq$ the transfer amount \texttt{<amount>}'').
  \item \texttt{MaxCount}: the profile must have \emph{at most} $n$ rows in a table, for scenarios that add records to capacity-limited collections (e.g., ``fewer than 3 saved addresses'').
\end{itemize}

Beyond explicit constraints, the framework \textbf{auto-derives} additional constraints from the mockdata templates. If mockdata references \texttt{\{\{first\_room\_id\}\}}, the system automatically adds an \texttt{EntityExists} constraint requiring at least one room in the database. This closes the gap between what the templates \emph{assume} exists and what the constraints \emph{check} exists.

\subsection{Triviality filtering}

The framework executes the full verification logic against the \emph{initial} state of each compatible profile, before the agent acts. Any profile where the task is already completed by coincidence (e.g., the ``correct'' database record happens to exist in the starting state) is excluded. This prevents inflated success rates from configurations where no agent action was needed.

\section{Detailed PRISM evaluation of existing benchmarks}
\label{app:prism_details}

This appendix provides a detailed assessment of each benchmark evaluated in Table~\ref{tab:prism_eval}. We focus exclusively on environments where the agent operates from visual inputs (screenshots), excluding text-only DOM-based benchmarks. To ensure consistent and transparent ratings, we apply the following rubric for each PRISM criterion:

\begin{description}[leftmargin=1.5em,itemsep=3pt]
  \item[Privileged (P):] \cmark{} = all tasks verified via programmatic state inspection; {\raise.17ex\hbox{$\scriptstyle\sim$}} = mixed (some tasks use programmatic checks, others rely on LLM-as-judge); \xmark{} = no programmatic verification.
  \item[Realistic (R):] \cmark{} = uses real-world or high-fidelity clone applications with production-grade complexity; {\raise.17ex\hbox{$\scriptstyle\sim$}} = purpose-built applications that capture realistic interaction patterns and visual complexity but are not clones of real products; \xmark{} = toy or heavily simplified interfaces that lack real-world visual or functional complexity.
  \item[Integrity (I):] \cmark{} = automated offline pipeline verifies every configuration is feasible, coherent, and non-trivial; {\raise.17ex\hbox{$\scriptstyle\sim$}} = includes procedural initialization logic that sets up preconditions (e.g., clearing databases, injecting entities) but does not systematically verify feasibility, coherence, or non-triviality across all configurations; \xmark{} = no automated safeguards for configuration validity.
  \item[Sandboxed (S):] \cmark{} = fully self-contained with deterministic reset; {\raise.17ex\hbox{$\scriptstyle\sim$}} = mostly isolated but with partial external dependencies; \xmark{} = relies on live services or lacks deterministic reset.
  \item[Multifactorial (M):] \cmark{} = systematically varies $\geq$2 independent environmental axes (e.g., data content and visual appearance); {\raise.17ex\hbox{$\scriptstyle\sim$}} = single-axis parameterization (e.g., task parameters vary but UI and data context remain static); \xmark{} = completely static task set with no parameterization.
\end{description}

A critical, often overlooked dimension of Integrity is \emph{staleness}: if an environment is live or its data is updated, tasks that reference specific entities, prices, or UI layouts may become impossible to complete. An integrity-checked benchmark must automatically verify that its tasks remain feasible and coherent against the current environment state.

\subsection{AndroidWorld}

Developed by Google DeepMind, AndroidWorld is a comprehensive environment for evaluating agents on Android devices \citep{rawles2024}.

\begin{itemize}[leftmargin=*,itemsep=1pt]
  \item \textbf{Privileged (\cmark):} Task success is verified programmatically by inspecting the device's internal system state via the Android Debug Bridge (ADB) and UI element validation.
  \item \textbf{Realistic (\cmark):} Uses 20 real-world, unmodified Android applications, including OS-level apps and open-source apps from F-Droid.
  \item \textbf{Integrity-checked ({\raise.17ex\hbox{$\scriptstyle\sim$}}):} Tasks include dedicated initialization logic to set up preconditions (e.g., clearing databases, injecting required entities), providing procedural safeguards that reduce the chance of infeasible configurations. However, there is no automated offline pipeline that systematically verifies feasibility, coherence, and non-triviality across all randomly generated configurations before evaluation.
  \item \textbf{Sandboxed (\cmark):} Runs on a self-contained Pixel~6 emulator with a fixed Android~13 image, allowing deterministic resets of device state and time.
  \item \textbf{Multifactorial ({\raise.17ex\hbox{$\scriptstyle\sim$}}):} Dynamically constructs parameterized tasks by randomly generating entity values (names, phone numbers, dates) within natural language instructions. However, this is single-axis parameterization (task instance); it does not systematically vary independent axes like UI themes or initial app data states.
\end{itemize}

\subsection{OSWorld}

OSWorld tests multimodal agents on open-ended computer tasks across standard operating systems \citep{xie2024osworld}.

\begin{itemize}[leftmargin=*,itemsep=1pt]
  \item \textbf{Privileged (\cmark):} Verification relies on custom execution-based evaluation scripts that programmatically check the internal system state (files, app state).
  \item \textbf{Realistic (\cmark):} Employs real, unmodified applications on standard operating systems (Ubuntu, Windows, macOS).
  \item \textbf{Integrity-checked ({\raise.17ex\hbox{$\scriptstyle\sim$}}):} Each task includes an initial state setup configuration (e.g., pre-configured VM snapshots, file system state) that establishes preconditions, reducing the chance of infeasible starting states. However, there is no automated offline pipeline that systematically verifies feasibility, coherence, or non-triviality of the 369 task configurations.
  \item \textbf{Sandboxed (\cmark):} Utilizes virtualization and containerization (e.g., Docker) for deterministic resetting and reproducibility without relying on live internet access.
  \item \textbf{Multifactorial (\xmark):} Features a completely static test set of 369 fixed tasks. Each task has a fixed initial VM snapshot, a fixed goal, and a fixed evaluation script, with no parameterization or systematic variation.
\end{itemize}

\subsection{VisualWebArena}

An extension of the text-based WebArena, VisualWebArena introduces visually grounded tasks \citep{koh2024visualwebarena}.

\begin{itemize}[leftmargin=*,itemsep=1pt]
  \item \textbf{Privileged (\cmark):} Executes JavaScript within the browser to inspect the DOM and internal application state for programmatic verification.
  \item \textbf{Realistic (\cmark):} The underlying web applications (Classifieds, Shopping, Reddit) are functional clones of real platforms that faithfully reproduce the structural, visual, and interaction complexity of their production counterparts, populated with real or realistic data.
  \item \textbf{Integrity-checked (\xmark):} Tasks are generated from curated templates without automated offline checks for feasibility, non-triviality, or staleness during generation.
  \item \textbf{Sandboxed (\cmark):} Fully sandboxed using Docker containers for each web application, enabling deterministic resets.
  \item \textbf{Multifactorial ({\raise.17ex\hbox{$\scriptstyle\sim$}}):} Uses parameterized task templates to generate unique tasks, but variability is restricted to swapping parameters within the same static UI and data environment.
\end{itemize}

\subsection{WorkArena}

WorkArena evaluates agents on enterprise software platforms \citep{drouin2024workarenacapablewebagents}.

\begin{itemize}[leftmargin=*,itemsep=1pt]
  \item \textbf{Privileged (\cmark):} Completion is verified programmatically using validation functions that check the internal state of the ServiceNow application.
  \item \textbf{Realistic (\cmark):} Uses the real, unmodified ServiceNow platform.
  \item \textbf{Integrity-checked (\xmark):} Lacks a comprehensive, automated integrity check for all task configurations, relying instead on manual curation and LLM generation without strict offline validation against the live platform state.
  \item \textbf{Sandboxed (\xmark):} The environment is remote-hosted and requires access to live ServiceNow instances, lacking deterministic reset or snapshotting capabilities.
  \item \textbf{Multifactorial ({\raise.17ex\hbox{$\scriptstyle\sim$}}):} Features thousands of parameterized task instances, but variability is limited to single-axis parameterization (e.g., different questions or forms) rather than independent axes of UI or state variation.
\end{itemize}

\subsection{REAL-bench}

REAL-bench tests agents on deterministic simulations of real-world websites \citep{garg2025real}.

\begin{itemize}[leftmargin=*,itemsep=1pt]
  \item \textbf{Privileged ({\raise.17ex\hbox{$\scriptstyle\sim$}}):} Uses a hybrid approach: action-based tasks are verified via programmatic state-diff mechanisms, but information-retrieval tasks rely entirely on an LLM-as-judge.
  \item \textbf{Realistic (\cmark):} Uses high-fidelity deterministic replicas of popular websites (e.g., Airbnb, Amazon, Gmail) that faithfully mirror the structural and functional complexity of the production sites.
  \item \textbf{Integrity-checked (\xmark):} Tasks are manually curated without automated verification for feasibility or coherence.
  \item \textbf{Sandboxed (\cmark):} A fully self-contained, controlled sandbox using deterministic simulations.
  \item \textbf{Multifactorial (\xmark):} Focuses on standardized simulations without systematic variation across factors like UI themes or initial data states; tasks are fixed single-instance scenarios.
\end{itemize}

\subsection{DigiData-Bench}

DigiData-Bench evaluates general-purpose mobile control agents on complex real-world tasks \citep{sun2025}.

\begin{itemize}[leftmargin=*,itemsep=1pt]
  \item \textbf{Privileged (\xmark):} Task success is evaluated either by human operators or via an LLM-as-judge analyzing screenshots and UI trees. There is no programmatic verification of the internal system state.
  \item \textbf{Realistic (\cmark):} Uses real-world Android applications on actual Android devices.
  \item \textbf{Integrity-checked (\xmark):} Uses manual ``state initialization instructions'' for human operators to set up prerequisites, lacking an automated offline verification pipeline.
  \item \textbf{Sandboxed (\xmark):} Runs on live devices interacting with applications in the wild, requiring human operators to monitor and intervene to prevent unsafe actions.
  \item \textbf{Multifactorial (\xmark):} Consists of a fixed set of 309 goals without parameterized variation or independent axes of complexity.
\end{itemize}

\subsection{WebVoyager}

WebVoyager evaluates end-to-end web agents on real-world websites \citep{he2024webvoyager}.

\begin{itemize}[leftmargin=*,itemsep=1pt]
  \item \textbf{Privileged (\xmark):} Relies entirely on GPT-4V as an LLM-judge to evaluate task success based on screenshots, with no programmatic state verification.
  \item \textbf{Realistic (\cmark):} Agents interact directly with live, real-world production websites (e.g., Apple, Booking.com, Google Flights).
  \item \textbf{Integrity-checked (\xmark):} Tasks are manually compiled without automated feasibility or coherence checks against the live, constantly drifting websites.
  \item \textbf{Sandboxed (\xmark):} Operates on the live internet, making the environment uncontrolled and susceptible to UI updates, A/B tests, and network latency.
  \item \textbf{Multifactorial (\xmark):} Features a static set of 643 tasks without systematic variation.
\end{itemize}

\subsection{MobileWorld}

MobileWorld benchmarks autonomous mobile agents in agent-user interactive and MCP-augmented environments \citep{kong2025}.

\begin{itemize}[leftmargin=*,itemsep=1pt]
  \item \textbf{Privileged (\cmark):} Employs multiple deterministic verification methods, including textual answer verification, backend database queries (e.g., PostgreSQL for Mattermost), local storage inspection via ADB, and application callbacks, entirely avoiding LLM-as-judge for task success.
  \item \textbf{Realistic (\cmark):} Uses real Android applications, including a mix of stock apps and production-grade self-hosted open-source alternatives (Mattermost, Mastodon, Mall4Uni) that faithfully reproduce real-world functional complexity.
  \item \textbf{Integrity-checked ({\raise.17ex\hbox{$\scriptstyle\sim$}}):} Uses snapshot-based state management to ensure reproducible starting conditions, providing deterministic preconditions for each task. However, there is no automated offline pipeline to verify the feasibility, coherence, or non-triviality of the 201 manually created tasks.
  \item \textbf{Sandboxed (\cmark):} Fully containerized using Docker-in-Docker with rooted Android Virtual Devices and self-hosted backends, ensuring complete isolation from external dependencies and deterministic reset capabilities.
  \item \textbf{Multifactorial (\xmark):} Features a static set of 201 fixed tasks, each starting from a predetermined snapshot. It does not parameterize tasks or systematically vary independent axes like UI theme or initial data.
\end{itemize}

\subsection{OpenApps}

OpenApps focuses on measuring UI-agent reliability by simulating environment variations \citep{ullrich2025}.

\begin{itemize}[leftmargin=*,itemsep=1pt]
  \item \textbf{Privileged (\cmark):} Reward and verification are based on programmatic checks of the full underlying application state.
  \item \textbf{Realistic ({\raise.17ex\hbox{$\scriptstyle\sim$}}):} Uses purpose-built synthetic applications (calendar, messenger, maps) that capture realistic interaction patterns and visual complexity, but are not clones of real-world products and lack the full functional depth of production applications.
  \item \textbf{Integrity-checked (\xmark):} Tasks are generated based on templates without a separate, automated validation step for integrity checking.
  \item \textbf{Sandboxed (\cmark):} A self-contained Python-based environment running on a single CPU, allowing for scalable, deterministic resets.
  \item \textbf{Multifactorial (\cmark):} Explicitly designed to systematically vary appearance (e.g., dark theme, font size) and content independently to test agent generalization across thousands of app versions.
\end{itemize}

\subsection{MiniWoB++}

MiniWoB++ is a collection of simple web interaction environments \citep{liu2018}.

\begin{itemize}[leftmargin=*,itemsep=1pt]
  \item \textbf{Privileged (\cmark):} Controlled via Selenium WebDriver, providing full access to the DOM and page state for programmatic verification.
  \item \textbf{Realistic (\xmark):} Uses synthetic toy interfaces and heavily simplified clones of web applications with surrogate backends.
  \item \textbf{Integrity-checked (\xmark):} Tasks are manually created or templated without systematic integrity checks for each instance.
  \item \textbf{Sandboxed (\cmark):} Runs locally via Selenium without requiring internet access, ensuring deterministic resets.
  \item \textbf{Multifactorial ({\raise.17ex\hbox{$\scriptstyle\sim$}}):} Incorporates parameter randomization for task goals, but lacks systematic variation of visual appearance or data context as independent axes.
\end{itemize}

\section{Simulation calibration from evaluation data}
\label{app:sim_calibration}

Both simulation studies in Figure~\ref{fig:simulation} are calibrated from \textsc{DigiWorld} evaluation data to ensure realistic conditions.

\subsection{Base-level estimator calibration (Figure~\ref{fig:simulation}a)}

We extract per-configuration success rates from all three frontier model evaluations on \textsc{DigiWorld} (${\sim}19{,}500$ configurations total). The observed distribution is heavily bimodal: ${\sim}68\%$ of configurations have a 0\% success rate and ${\sim}32\%$ have a 100\% success rate, reflecting that most configurations are either very hard or very easy for current models at $R{=}3$ rollouts.

The observed rates are discretized (only 0, $\tfrac{1}{3}$, $\tfrac{2}{3}$, 1 are possible at $R{=}3$), so we recover continuous true probabilities via a Jeffreys posterior: for each observed $k$ successes in $R$ rollouts, we draw $p_{\text{true}} \sim \text{Beta}(k + \tfrac{1}{2}, R - k + \tfrac{1}{2})$. This standard noninformative posterior converts the degenerate point masses at 0 and 1 into realistic continuous distributions (e.g., a 0/3 observation yields $p_{\text{true}} \sim \text{Beta}(0.5, 3.5)$ with mean 0.125). We then simulate $R$ fresh rollouts from each $p_{\text{true}}$ and compute Wald and Wilson CIs.

\subsection{Hierarchical bootstrap calibration (Figure~\ref{fig:simulation}b)}

The simulation uses $A{=}15$ apps, $S{=}8$ scenarios per app, $3{\times}3{\times}3$ configurations per scenario, and $R{=}3$ rollouts. Three parameters are calibrated from evaluation data:

\begin{itemize}[leftmargin=*,itemsep=2pt]
  \item \textbf{App-level success probabilities.} We use the observed per-app success rates from a representative mid-range model (ranging from 0.16 to 0.62, mean 0.41), rather than the synthetic $\text{linspace}(0.05, 0.95)$ used in prior work.

  \item \textbf{Scenario noise ($\sigma_{\text{scen}}$).} We compute the standard deviation of per-scenario success rates within each app, averaged across apps and models. The raw observed value (${\sim}0.39$) is only slightly inflated by rollout sampling noise (the per-scenario rate averages over ${\sim}57$ binary outcomes), yielding a corrected $\sigma_{\text{scen}} \approx 0.25$. This large value reflects the genuine difficulty spread within apps: some scenarios (e.g., ``send an email'') are much easier than others (e.g., ``find and forward a specific thread'').

  \item \textbf{Configuration noise ($\sigma_{\text{config}}$).} The observed per-configuration standard deviation within scenarios (${\sim}0.19$) is almost entirely explained by rollout sampling noise at $R{=}3$ (expected $\sqrt{p(1{-}p)/R} \approx 0.28$), leaving a corrected $\sigma_{\text{config}} \approx 0.05$. This small value indicates that, at $R{=}3$, the dominant source of between-configuration variation is rollout stochasticity rather than true environmental effects, a finding consistent with the relatively modest MADs observed in Section~\ref{sec:decomposition}. Larger $R$ would resolve finer true environmental effects.
\end{itemize}

We run 200 independent experiments with $B{=}500$ bootstrap replicates for each of five methods (Wald CI, and four cumulative bootstrap variants).

\textbf{Note on robust aggregation.}~~ For benchmarks with a larger number of applications or where some applications may produce anomalous scores (e.g., due to degenerate tasks or extreme difficulty mismatches), replacing the plain mean with a trimmed mean $\text{TM}_\alpha$ can improve robustness. In \textsc{DigiWorld}'s current setting ($A{=}15$ curated apps on a common 0--1 scale), the plain mean and 10\% trimmed mean produce nearly identical results, so we use the simpler estimator. The hierarchical bootstrap framework supports any suite-level statistic, including trimmed means, medians, or interquartile means, as a drop-in replacement.


\section{Application gallery}
\label{app:app_gallery}

\textsc{DigiWorld} comprises 15 purpose-built Android applications spanning five domains: communication (Email, Messaging), finance (Payment, Banking), shopping (E-Commerce, QwikShop, Auction), entertainment (Music, Video), on-demand services (Eats, Ryde), travel (Flight Booking, Transit), and smart living (Smart Home, Parking). Each application is a fully functional React Native app with local SQLite storage, custom theming support, and deep-link--driven state management. Figure~\ref{fig:app_gallery} shows the default home screen of each application.

\begin{figure}[h]
  \centering
  \includegraphics[width=\textwidth]{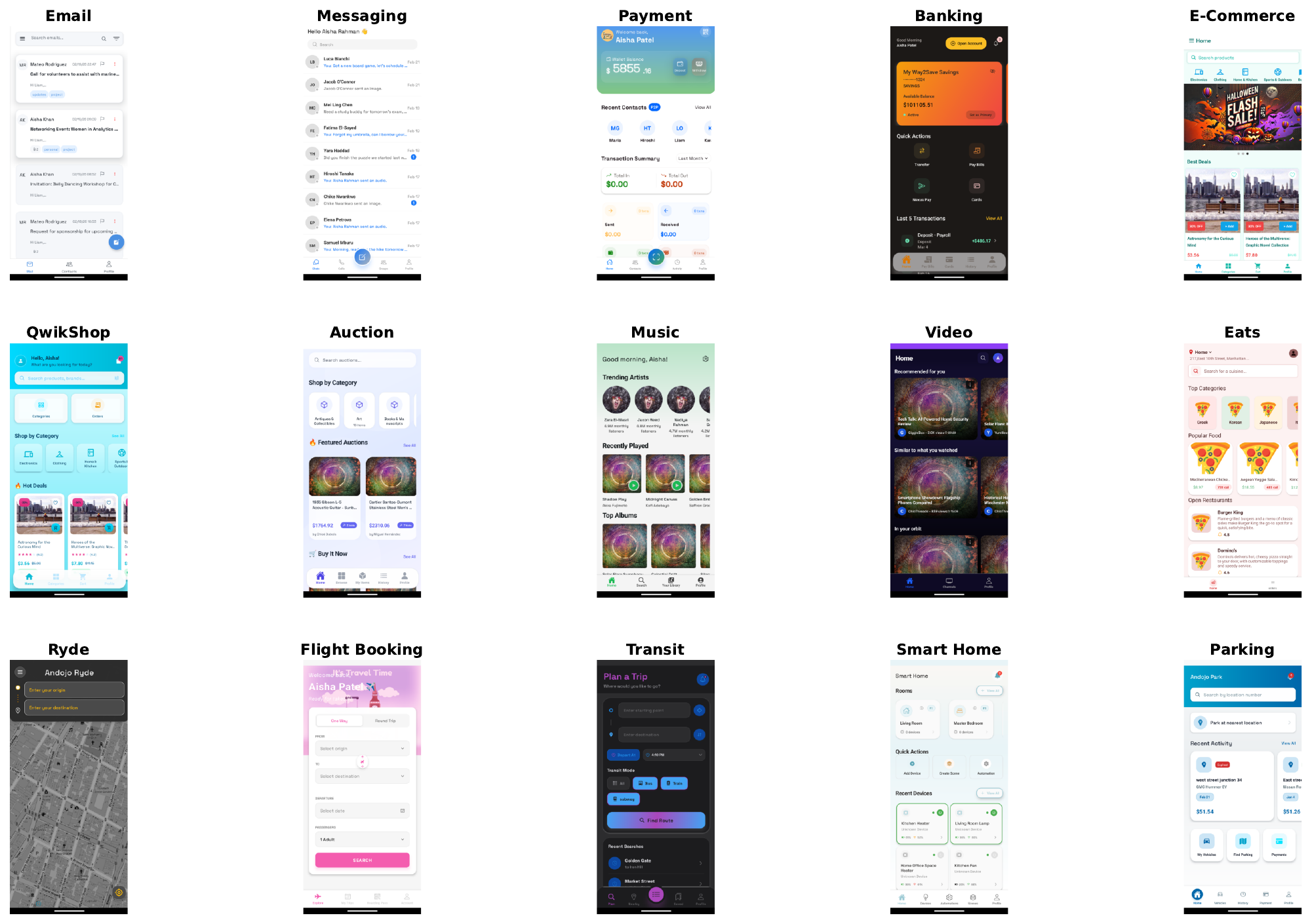}
  \caption{The 15 \textsc{DigiWorld} applications shown in their default configuration. Each app is a self-contained, sandboxed Android application with realistic UI and data.}
  \label{fig:app_gallery}
\end{figure}

\section{Visual impact of variability axes}
\label{app:variability_visuals}

Figure~\ref{fig:variability_axes} illustrates how each of the four environmental variability axes (Table~\ref{tab:variability_axes}) affects the visual appearance of four representative applications. For each app, we show the default configuration alongside variants produced by changing a single axis:

\begin{itemize}[leftmargin=*,itemsep=1pt]
  \item \textbf{Instance}: different task parameters are bound to the scenario template, injecting different concrete data into the app (e.g., different email subjects and senders, different wallet balances, different restaurant and category names).
  \item \textbf{Data Profile}: loads an entirely different database profile, changing all visible content (e.g., different users, transaction histories, menu items).
  \item \textbf{Theme}: switches the visual styling to a dark palette, changing background colors, text contrast, and accent tones.
  \item \textbf{UI State}: starts the app on a different screen (e.g., Contacts, Transactions, Orders, Cards), altering the navigation context.
\end{itemize}

These visual differences are representative of the environmental variation that agents must handle robustly. All four axes are independently composable: a single evaluation configuration combines one instance, one data profile, one theme, and one UI state, yielding over 3.2 million verified unique configurations across the benchmark.

\begin{figure}[h]
  \centering
  \includegraphics[width=\textwidth]{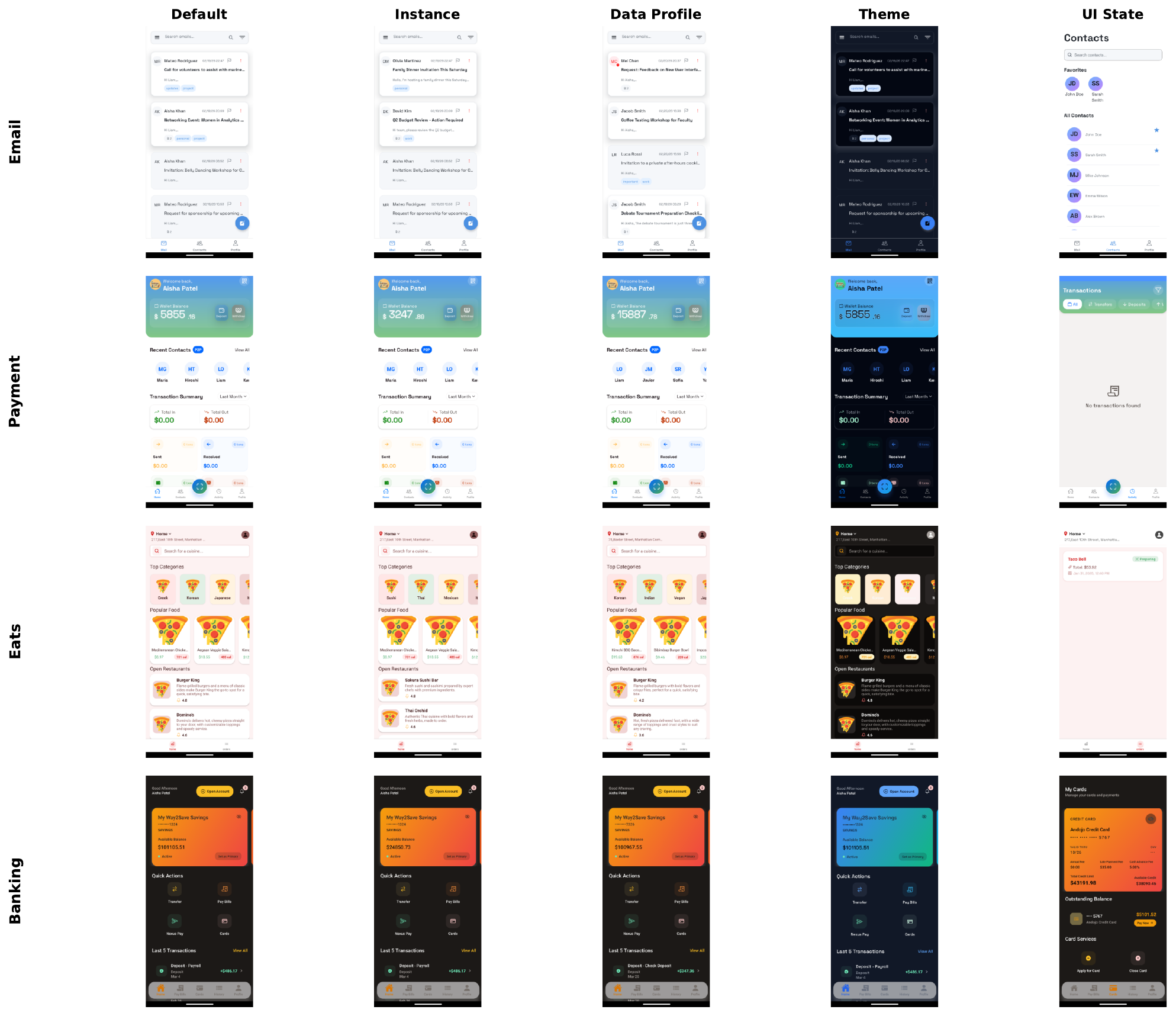}
  \caption{Visual impact of each environmental variability axis on four representative applications. Each column corresponds to one of the four environmental axes in Table~\ref{tab:variability_axes}; the leftmost column shows the default configuration for reference. Rows correspond to Email, Payment, Eats, and Banking.}
  \label{fig:variability_axes}
\end{figure}

\end{document}